\newtheorem{example}[theorem]{Example}
\newcommand{\brem}{\begin{remark}}
\newcommand{\erem}{\end{remark}}
\newcommand{\bexam}{\begin{example}}
\newcommand{\eexam}{\end{example}}
\newcommand{\bde}{\begin{definition}}
\newcommand{\ede}{\end{definition}}
\newcommand{\ble}{\begin{lemma}}
\newcommand{\ele}{\end{lemma}}
\newcommand{\bpr}{\begin{proposition}}
\newcommand{\epr}{\end{proposition}}
\newcommand{\beao}{\begin{eqnarray*}}
\newcommand{\eeao}{\end{eqnarray*}\noindent}
\newcommand{\beam}{\begin{eqnarray}}
\newcommand{\eeam}{\end{eqnarray}\noindent}
\newcommand{\barr}{\begin{array}}
\newcommand{\earr}{\end{array}}
\newcommand{\bdis}{\begin{displaymath}}
\newcommand{\edis}{\end{displaymath}\noindent}
\def\Z{{\mathbb Z}}
\def\E{{\mathbb E}}
\def\R{{\mathbb R}}
\def\cals_+{{\cals_+}}
\def\cals{{\mathcal{S}}}
\def\bone{{\mathbf 1}}
\newcommand{\bsz}{\boldsymbol{Z}}
\newcommand{\bsx}{\boldsymbol{X}}
\newcommand{\bfx}{\boldsymbol{X}}
\newcommand{\bsa}{\boldsymbol{a}}
\newcommand{\stp}{\stackrel{P}{\rightarrow}}
\newcommand{\std}{\stackrel{d}{\rightarrow}}
\newcommand{\eps}{\varepsilon}
\newcommand{\DAG}{\textrm {DAG}}
\newcommand{\RV}{{\textrm{RV}}}
\newcommand{\an}{\textrm {an}}
\newcommand{\pa}{\textrm {pa}}
\newcommand{\des}{\textrm {de}}
\newcommand{\An}{\textrm {An}}
\newcommand{\ov}{\overline}
\newcommand{\wh}{\widehat}
\let\norm\undefined 
\DeclarePairedDelimiter\norm{\lVert}{\rVert}
\newcommand{\halmos}{\quad\hfill\mbox{$\Box$}}
\def\D{\mathcal{D}}
\definecolor{red(ncs)}{rgb}{0.77, 0.01, 0.2}
 \title{Causal analysis of extreme risk in a network of industry portfolios}
  \author[orcid=0000-0002-0189-2384, email=cklu@ma.tum.de, corresponding]
         {Claudia \surname{Kl\"uppelberg }}
  \affil{Department of Mathematics, Technical University of Munich, Munich, Germany}
  \author[orcid=0000-0002-6203-2804, email=mario.krali@epfl.ch]
         {Mario \surname{Krali}}
   \affil{Institute of Mathematics, EPFL, Lausanne, Switzerland}
\begin{document}

\maketitle                   


\section{Introduction}

Financial systems have become
increasingly interconnected in the last decades, which has also led to considerable development of the applications of network theory to risk models for finance and insurance. 
Hundreds of scientific publications proposing models, methods, and data analyses have been published, and we refer to \cite{Battiston, Chong, Cont, BattistonStiglitz, Eisenberg, Garcia, GaiKarp, Glasserman, Reinert, LiZhang} to name a few, where further references for more detailed studies can be found.
The devastating consequences of extreme risk events can manifest in domino effects that can spill over from firm to firm or market to market, threatening the world financial system as during the financial crisis in 2007 or the Covid-19 pandemic.


Multivariate extreme value theory has been developed and applied to risk management problems in finance, and we refer to \cite{McNeil, poonetal} for excellent expositions.
High dimensionality and the scarcity of rare events  present challenges for extremal dependence modelling, limiting most applications to fairly low dimensions.
Exceptions are~\cite{Chautru, JanWan}, who propose clustering approaches, \cite{cooley, leecooley}, who develop a principal components-like decomposition, \cite{HKK}, who work with an approach akin to factor analysis, and \cite{goix}, who study support detection for extremes.   

Network modelling for extreme risks is a fairly new area of research.
One approach combining graphical modelling with extremes has been proposed by \cite{engelke:hitz:18}, who introduce a conditional independence notion between the node variables for undirected extremal graphical models.
Their work is based on the assumption of a decomposable graph as well as the existence of a density, which then leads to a Hammersley-Clifford type factorization of the latter into lower dimensional factors. 
In \citet{segers}, Markov trees with regularly varying node variables are investigated using the so-called tail chains.

A second approach originates from \cite{gk}, who propose and study max-linear structural equation models.
Structural equation models go back to \cite{pearl} and are mostly used in their linear form with Gaussian distributions {as sources of the errors} and correlations as dependence measures.
Their max-linear analogues
allow for modelling cause and effect in the context of extreme risk analysis. 
The underlying graphical structure of the model is a directed acyclic graph and the max-linear stuctural equation model is defined recursively (see eq. \eqref{semequat1} below).
Its unique solution is given in eq. \eqref{Rmlmequat} below.
A number of publications have studied the model and have addressed problems like identification, estimation and structure learning, and we shall give more details later in the paper.
The model has been used in a variety of applications, such as industry portfolio data \citep{KK}, food dietary interview data \citep{BK,KK,KDK}, and flood data from Bavaria (Germany) and from Texas (USA) \citep{TBK}. 
A recursive max-linear model has been fitted to data from the EURO STOXX 50 Index in \citet{einmahl2016}, where the structure of the DAG is assumed to be known.

We begin by introducing the class of  network models which we consider for causal extreme risk modelling.
These are formulated as max-linear structural equation models \citep{pearl},
supported on a {directed acyclic graph (DAG) $\mathcal{D}=(V,E)$ with nodes $V=\{1,\dots,d\}$ and edges $E$}, and are defined through the formula 
\begin{align}\label{semequat1}
X_i  &=  \bigvee_{k\in \pa(i)} c_{ik} X_k\vee c_{ii}Z_i,\hspace{5mm} {i\in V},
\end{align}
where the innovations $Z_1,\dots,Z_d$ are independent atom-free random variables with support $\mathbb{R}_+=[0,\infty)$ and the edge  weights $c_{ik}\ge 0$ are positive for all $i \in V$ and $k\in\pa(i)$, which denotes the parents of node $i$.
The operator $\vee$ denotes the maximum, i.e., $\vee_{i\in I} \alpha_i=\max_{i\in I}\alpha_i$ for $(\alpha_1,\ldots, \alpha_d)\in \mathbb{R}^d_+$ and $I\subseteq V$.
 For later use, the innovations and weights are assembled into the innovations vector $\boldsymbol{Z}=(Z_1,\dots,Z_d)$ and the edge weight matrix $C=(c_{ik})_{d\times d}$.  

The present paper predominantly provides  a comprehensive review of the state-of-the-art methods in causality for max-linear structural equation models given in \eqref{semequat1}. Our main Theorem~\ref{p2sourcenodes}  summarizes and reorganizes results, which have been formulated for linear structural equation models in \cite{K}, and we adapt and simplify its proof. 
To see our method at work for a larger example than previously considered in the literature, we apply our algorithms to a financial dataset of 30 industry portfolios.
In contrast to previous papers \cite{KK,K}, we estimate the DAG based on the firstly identified order and implement a new estimation method.
We propose a hard thresholding procedure to
estimate a sparse max-linear coefficient matrix $A$, thus eliminating redundant edges.
Estimates now depend on the number $n$ of data and the number $k$ of exceedances as well as on the threshold parameter $\delta$. 
For fixed $\delta$ we investigate the stability of estimated DAGs for groups of exceedances. 
We estimate the best DAG in every group as the minimizer of the normalised structural Hamming distance between any two estimated DAGs.
Visual inspection now determines the best estimated DAG for different groups of exceedances and different threshold parameters $\delta$.

We aim at an expository style to make the network model and its structure learning and parameter estimation available also for the non-expert in multivariate extreme value theory and causal analysis. 
Although we cannot dispose of certain concepts like multivariate regular variation, we try to keep technicalities as low as possible;
the estimation procedures for the network structure and the {dependence} parameters are given by two algorithms in Sections 4 and 5 below.
Both algorithms are implemented as a plug-and-play \texttt{R} package, {which will soon be 
available at}\\
\centerline{\texttt{https://github.com/mariokrali
}}\\
and includes all data and codes to produce the results and figures in this paper. 

Our paper is structured as follows. Standard graph terminology is summarised at the end of this section.
In Section~2 we present the unique solution of the max-linear structural equation model \eqref{semequat1} and discuss the identifiability problem of its edge weights as opposed to its max-linear coefficients.
Section~3 introduces multivariate regular variation and provides the dependence structure of the regularly varying model, which is given in terms of the angular measure.
Here it is shown that the angular measure and the scalings of the components of $\bsx$ have max-linear representations.
All relevant max-linear representations needed for structure learning are derived.
Whereas we avoid long proofs throughout the paper, in this section we give some short proofs as to introduce the reader to the kind of arguments leading to our results.
Section~4 prepares for causal discovery by applying the scaling technique that identifies the source nodes and also orders all descendants.
Here Algorithm 1 finds a causal order of all nodes, and examples provide intuition behind the procedure.
Section~\ref{sec:A} deals with statistical inference of the model parameters.
Here again we use max-linear representations of scalings of certain random objects to identify the max-linear coefficient matrix as summarised in Algorithm~2.
After having identified a causal order of the nodes as well as the max-linear coefficients theoretically, in Section~6 we define their empirical counterparts, which lead to consistent estimation of a causal order and of the max-linear coefficient matrix $A$.
Section~7 shows our method at work by estimating a DAG with 30 nodes based on a financial dataset of 30 industry portfolios. 
Here, we first estimate a causal order of the portfolios and then the max-linear coefficient matrix $A$, which respects this order. 

We use standard terminology for directed graphs \citep{lau}.
Let $\mathcal{D}=(V, E)$ be a DAG with node set $V=\{1,\dots,d\}$  and edge set $E\subset V\times V$. 
We write $j\to i$ to denote the edge $(j,i)$ from node $j$ to $i$. 
Then a path $p_{ji}\coloneqq[\ell_0=j\to \ell_1 \to\cdots\to \ell_{m}=i]$ is written as  $j\rightsquigarrow i$, and we say that $X_j$ causes $X_i$ (or $j$ causes $i$) whenever there is a path  between the corresponding nodes.
The parents, ancestors and descendants of a node $i\in V$ are, respectively, $\pa(i)=\{j\in V: j\to i\}$, $\an(i)=\{j\in V: j \rightsquigarrow i\}$ and $\des(i)=\{j\in V: i \rightsquigarrow j\}$; we also write 
$\An(i)=\an(i)\cup \{i\}$.
If $U\subseteq V$, then  $\an(U)$ denotes the ancestral set of all nodes in $U$, and $\An(U)=\an(U)\cup U$. 
A node $i\in V$ is a {source node} if $\pa(i)=\emptyset$.

A graph {$\D_1=(V_1,E_1)$} is a subgraph of $\D$ if $V_1\subseteq V$ and $E_1\subset (V_1\times V_1)\cap E$. If $\D$ is a DAG, then  $\D_1$ is also a DAG.

A DAG $\mathcal{D}=(V,E)$ is called {well-ordered} if for all $i\in V$ it is true that $i<j$ for all $j\in \pa(i)$. 
We refer to such an order as a {causal order}. 


\section{Recursive max-linear models}\label{sec:RMLM}

A max-linear structural equation system $\bsx\in\R^d_+$ as defined in~\eqref{semequat1} has a unique solution which can be derived via tropical algebra, i.e., linear algebra with arithmetic in the max-times semiring $(\R_+,\vee,\times)$ defined by $a\vee b:=\max(a,b)$ and $a\times b := ab$ for $a,b\in\R_+:=[0,\infty)$ \citep[see, e.g.][]{BuT2010}.  These operations extend to $\R_+^d$ coordinatewise and to corresponding matrix multiplication $\times_{\max}$  \citep{amendola,gk}. 
In the present paper, vectors are generally column vectors; we write $\bsz=(Z_1,\dots,Z_d)$ for the column vector of innovations. 
Tropical multiplication of the max-linear coefficient matrix $A$ with $\bsz$ yields the unique solution to \eqref{semequat1}  \citep[Theorem~2.2]{gk}:
\begin{align}\label{Rmlmequat}
\bsx = A\times_{\max} \bsz\quad\mbox{with}\quad X_i=(A\times_{\max} \bsz)_i={{\underset{j\in \An(i)}{\bigvee}}} a_{ij}Z_j,\hspace{5mm} i\in V.
\end{align}
The max-linear coefficient matrix $A=(a_{ij})_{d\times d}$ is defined by the path weights $d(p_{ji})= c_{jj} c_{k_1j}\cdots c_{ik_{\ell-1}}$ for each path $j\rightsquigarrow i$.
The entries of $A$ are defined by
\begin{align*}
a_{ij}=\underset{j\rightsquigarrow i}{\bigvee}d(p_{ji}) \mbox{ for } j\in {\An}(i),\quad a_{ij}=0 \mbox{ for }  j\in V\setminus {\An}(i),\quad a_{ii}=c_{ii},
\end{align*}
and a path $j\rightsquigarrow i$ such that $a_{ij}$ equals $d(p_{ji})$ is called max-weighted. This implies that all positive entries of $A$ belong to max-weighted paths, which are the relevant paths for extreme risk propagation in a network.
The solution $\bsx$ with components as in \eqref{Rmlmequat} is called recursive max-linear model (RMLM) \citep{gk} or max-linear Bayesian network \citep{amendola,gkl}. 

We focus on the RMLM $\bsx$ as in \eqref{Rmlmequat} for two reasons:

Firstly, it is based on a winner-takes-all mechanism which captures the common experience that only the largest shocks from ancestral nodes propagate through the network and thus play a dominating role on the descendants. 
Such a mechanism embeds non-linear dependencies between extremes, in contrast to the classical recursive linear model, while respecting the network structure. 
The resulting network leads to a natural reduction in complexity for a statistical analysis as often observed in extreme value models.

Secondly, max-linear models have in general the property of approximating any dependence structure between extremes arbitrarily well as the number of involved factors grows, a
property which makes them an attractive and interesting object of study in extreme value theory; see Fougerès et al. (2013); Wang and Stoev (2011). 

There exists also a statistical reason to work with the RMLM \eqref{Rmlmequat}. 
The max-linear coefficients of the matrix $A$ can be identified, 
which is in contrast to the edge weights of the matrix $C$ in \eqref{semequat1}.
This is discussed in \cite{gk,gkl}, and the following Example~1 illustrates the problem. 

\bexam[\cite{gk}, Example 3.3, \cite{gkl}, Example 1\label{ch4:exprob2}] 
Consider a RMLM on the \DAG\ $\D$ depicted below with edge weights $c_ {12}, c_ {23}, c_ {13}$.
\begin{center}
		\begin{tikzpicture}[->,every node/.style={circle,draw},line width=0.8pt, node distance=1.6cm,minimum size=0.8cm,outer sep=1mm]
		\node (1) [outer sep=1mm]  {$3$};
		\node (2) [right of=1,outer sep=1mm] {$2$};
		\node (3) [right of=2,outer sep=1mm] {$1$}; 
		\foreach \from/\to in {2/3}
		\draw (\from) -- (\to);   
		\foreach \from/\to in {1/2}
		\draw (\from) -- (\to);   
		\foreach \from/\to in {1/3}
		\draw [ bend left]  (1) to (3);
		\node (n5)[draw=white,fill=white,left of=1,node distance=1cm] {$\D$};
		\end{tikzpicture}
\end{center}
According to \eqref{semequat1}, the components of $\bfx$ have the following representations 
\begin{align*}
X_3 = Z_3,\quad X_2 = Z_2\vee c_{23} X_3\quad \text{and} \quad  X_1 = Z_1 \vee c_ {12}X_2 \vee c_{13}X_3.
\end{align*}
They can also be reformulated in terms of the innovations using \eqref{Rmlmequat} as 
\begin{align}\label{eq:2a}
X_3 = Z_3,\quad X_2 = Z_2\vee c_{23} Z_3,\quad \text{and} \quad  X_1 = Z_1 \vee c_ {12}Z_2 \vee ( c_{12}c_ {23}\vee c_ {13})Z_3,
\end{align}
If  $c_ {13} \le c_ {12}c_ {23}$, we have for any $c^*_ {13} \in [0,c_ {12}c_ {23}]$
that  $a_ {13}= c_ {12}c_ {23}\vee c^*_ {13} 
=c_ {12}c_ {23}$; so 
 we could also write
\begin{align*}
X_1=Z_1\vee c_ {12}X_2\vee c^*_ {13} Z_3
\end{align*}
without changing the distribution of $\bfx$.
This implies that if $c_ {13}\leq c_ {12}c_ {23}$, then $\bfx$ is a RMLM on $\D$ with edge weights $c_ {12},c_ {23},c_ {13}$ but it is also a RMLM on the \DAG\ $\D^A$ depicted below with edge weights $c_ {12}, c_ {23}$. 
\begin{center}
	\begin{tikzpicture}[->,every node/.style={circle,draw},line width=0.8pt, node distance=1.6cm,minimum size=0.8cm,outer sep=1mm]
	\node (1) [outer sep=1mm]  {$3$};
	\node (2) [right of=1,outer sep=1mm] {$2$};
	\node (3) [right of=2,outer sep=1mm] {$1$}; 
	\foreach \from/\to in {2/3}
	\draw (\from) -- (\to);   
	\foreach \from/\to in {1/2}
	\draw (\from) -- (\to);  
	\node (n5)[draw=white,fill=white,left of=1,node distance=1cm] {$\D^A$};
	\end{tikzpicture}
\end{center}

Consequently, we can identify neither $\D$ nor the value $c_ {13}$ from the distribution of $\bfx$. 
However, the max-linear coefficient  $a_ {13}=c_ {12}c_ {23}$ is uniquely determined. 
If we assume that $c_ {13}>c_ {12}c_ {23}$, then the DAG $\D$ 
and  the edge weights $c_ {12}, c_ {23}, c_ {13}$ represent $\bfx$ as given in \eqref{eq:2a}. 
Thus in this case $c_{13}$ is relevant and the \DAG\ $\D$ and all edge weights are identifiable from $\bfx$. 
\halmos
\eexam

As seen in the previous example, an important concept for RMLMs, derived from the matrix $A$ is that of the minimal max-linear DAG \citep[Definition~5.1]{gk}, which ignores all edges which are not max-weighted.  
We recall from Corollary~4.3 of \cite{gk}  that, as the result of a detailed path analysis,
$$a_{ij}\ge  \underset{{k\in \textrm{de}(j)\cap \pa(i)}}{\bigvee} \frac{a_{ik}a_{k j}}{a_{kk}},\quad i\in V, j\in \pa(i),$$
where we use the convention, since $a_{ij}\ge 0$, that the maximum over an empty set equals $0$. 
Moreover, all paths $j\rightsquigarrow i$ are of the form $j\rightsquigarrow k\rightsquigarrow i$, where {$k\in \des(j)\cap\an(i)$} if and only if
$a_{ij}=a_{ik}a_{k j}/a_{kk}$, otherwise $a_{ij}>a_{ik}a_{kj} /a_{kk}$.
In order to avoid redundancies of edges, we define the following DAG.

\begin{definition}\label{defDA}
	Let $\bsx$ be a RMLM on a DAG $\mathcal{D}=(V,E)$ with max-linear coefficient matrix $A$. Then the DAG 
	\begin{align*}
	\D^A=(V,E^A)\coloneqq\Big( V,\, \Big\{(j,i)\in E: a_{ij}> \underset{k\in \textrm{de}(j)\cap \pa(i)}{\bigvee} \frac{a_{ik}a_{k j}}{a_{kk}}\Big\}  \Big)
	\end{align*}
	is called the minimum max-linear DAG of $\boldsymbol{X}$.
\end{definition}

The DAG $\D^A$ defines the smallest subgraph of $\D$ such that $\bsx$ is a RMLM on this DAG and is identifiable from $\bsx$.
Below we shall estimate this DAG by structure learning and estimate the RMLM $\bsx$ by its max-linear coefficient matrix $A$.
Recall that the max-linear structural equation system is given in \eqref{semequat1} by its edge-weight matrix $C$ which, as we know from Example 1, may have superfluous edges.
Here we see the complexity reduction given by the solution \eqref{Rmlmequat}. 
The DAG $\D^A$ and its matrix $A$ give those paths that allow extreme risk to propagate through the network, and thus preserve all information relevant for the identification of cause and effect.

Two tasks lie ahead of us.

(1) Structure learning, or causal discovery, aims at a causal order of the node variables.
In non-extreme statistics such procedures often rely on the assumption of causal sufficiency of the underlying graphical structure, which ensures that there are no unmeasured or hidden sources of error, and then employ the concept of conditional independence to uncover an order. However, in a RMLM conditional independence needs a new separation concept as shown in \cite{amendola}. Furthermore, the concept of hidden confounding in a RMLM is also different from the theory in non-extreme statistics  \citep{KDK}. {To facilitate the analysis, we will assume that the graphical structure of the RMLM is causally sufficient \citep[p. 62]{pearl}. Although this may sound restrictive, we remark that we use the causal relationships in the RMLM to model dependencies among extreme observations. Consider, for instance, the RMLM $\boldsymbol X$ and the model $\boldsymbol Y=\boldsymbol X+\boldsymbol V$, where $\boldsymbol V$ has possibly dependent but lighter-tailed margins compared to $\boldsymbol X$. Because the extremal behaviour of $\boldsymbol X$ and $\boldsymbol Y$ is asymptotically equivalent, the graphical support of $\boldsymbol X$ suffices to describe the extremal dependence in $\boldsymbol Y$, even though it does not account for dependencies from the vector $\boldsymbol V$.
}

(2) Estimation of the max-linear coefficient matrix $A$ has previously been based on two concepts.
In \cite{gkl} it is proved that the ratio $Y_{ji}=X_j/X_i$ has an atom if and only if $j\in\an(i)$, which can be used to estimate the max-linear coefficient $a_{ij}$. As real life data do not exhibit precise atoms, \cite{BK} extend the model and estimation procedure to allow for one-sided noise.
\cite{TBK} consider a RMLM with two-sided noise and replace the no longer existing atom by a quantile-based score.  They provide a machine learning algorithm which outputs a root-directed spanning tree and works remarkably well. 
A different route of estimation is taken in \citet{KK}, \citet{KDK} and \citet{K} based on regularly varying innovations leading to a regularly varying RMLM $\bfx$, whose max-linear coefficients can be estimated via estimated scalings of $\bfx$. 
This is the method we shall present below. 

\section{Regular variation of a recursive max-linear model}\label{sec:RV}
{The theory of multivariate regular variation provides a rigorous framework for studying the extreme behaviour of random variables and vectors, and has motivated the development of statistical methods focusing only on the largest observations in a random sample.
For definitions and results on multivariate regular variation we refer to
\cite{sres,ResnickHeavy}.}

We suppose that the vector of innovations $\boldsymbol{Z}\in \R_+^d$ is regularly varying with index $\alpha>0$, written $\boldsymbol{Z}\in\RV_+^d(\alpha)$, and that it has independent and standardised components with $n\mathds{P}(n^{-1/\alpha} Z_i>z)\to z^{-\alpha}$ ($z>0$) as $n\to\infty$ for all $i\in V$. Such innovations are also called Pareto-tailed.

\bexam\label{Fr}
A prominent example of a one-dimensional regularly varying 
distribution, used in \eqref{Ftransform} below, is the standard Fr\'echet distribution with $\mathds{P(}Z_i \le z)=\exp\{-z^{-\alpha}\}$ for $z>0$ and $\alpha>0$.
We obtain \\
\centerline{$n\mathds{P}(n^{-1/\alpha} Z_i>z) = n (1-\exp\{-z^{-\alpha}/n\}) = n  (z^{-\alpha}/n) (1+o(1)) \to z^{-\alpha}$ for $z>0$.}\\[2mm]
More distributions can be found in Table 3.4.2 of \cite{EKM}.
\eexam


Our methodology assesses the order of nodes of components of $\boldsymbol{X}$ as well as the max-linear coefficient matrix by scalings of max-linear projections, which are based on the angular measure $H_{\bsx}$. 
Such ideas have also been applied in \cite{cooley, leecooley,KK,KDK}.

Any RMLM $\bsx=A\times_{\max}\bsz$ as in \eqref{Rmlmequat} with $\boldsymbol{Z}\in\RV_+^d(\alpha)$ also belongs to $\RV_+^d(\alpha)$.
 Consider its angular representation $(R,\boldsymbol{\omega})=(\norm{\boldsymbol{X}}, \bsx/ \norm{\boldsymbol{X}})$ for some norm $\norm{\cdot}$, where $R$ and $\boldsymbol \omega$ respectively denote the radial and angular components.
 Then $\omega_i={X_i}/{R}$ for $i\in\{1,\dots, d\}$, and $\boldsymbol{\omega}=(\omega_1,\dots,\omega_{ d})\in \Theta_+^{d-1} :=\{\boldsymbol{\omega}\in \mathbb{R}^{ d}_+: \norm{\boldsymbol{\omega}}=1\}$, the non-negative unit sphere in $ \mathbb{R}^{ d}_+$.
 
 The quantities $R$ and $\boldsymbol{\omega}$ can be obtained via transformation of $\bsx$ to polar coordinates \cite[Section 6.1.2]{ResnickHeavy}.
We provide max-linear representations  of the finite angular measure $H_{\boldsymbol{X}}$ on $\Theta_+^{d-1}$ and of its (non-normalised) second moments from the following representation.
Let $f\colon\Theta_+^{{d}-1}\to\mathbb{R}_+$ be a function, continuous outside a null set, bounded and compactly supported. 
Then the following moment exists \cite[eq. (3)]{lars} and we define
\begin{align}\label{p2empdist}
\mathbb{E}_{{H}_{\bsx}}[f(\boldsymbol{\omega})]& \coloneqq\lim\limits_{x\to \infty} \mathbb{E}[f(\boldsymbol{\omega})\mid R>x] = \int_{\Theta_+^{{d}-1}} f(\boldsymbol{\omega}) d{H}_{\bsx}(\boldsymbol{\omega}).
\end{align}
 
 \bde\label{scaledef}
 We define second moments of the finite angular measure $H_{\boldsymbol{X}}$ as
 \begin{align*} 
 \sigma_{i}^2 = \sigma_{X_i}^2& =\int_{\Theta_+^{d-1}}\omega_i^2 \,dH_{\boldsymbol{X}}(\boldsymbol{\omega}), \quad  i\in V,
\end{align*}
and call $\sigma_i =\sigma_{{X}_{i}}$ the scaling (parameter) of $X_i$ \cite[Section~4]{cooley}.
 \ede

Let  ${\boldsymbol{X}}=A\times_{\max}\boldsymbol{Z}$ be a RMLM for $\bsz\in\RV^d_+(\alpha)$ and let $A$ have column vectors  $\bsa_k$ for $k\in V$. 
Then $\bsx\in\RV_+^d(\alpha)$ has discrete angular measure, and by 
\citet[Proposition A.2]{GKO}, the angular measure is given by
\begin{align} \label{eq;Hx}
H_{\bsx}(\cdot)= \sum_{k\in V}\| \bsa_{k}\|^\alpha \delta_{\{\bsa_{k}/\| \bsa_{k}\|\}} (\cdot);
\end{align}
i.e., $H_{\bsx}$ has atoms $\bsa_{k}/\norm{\bsa_{k}}$ with weights $\| \bsa_{k}\|^\alpha$.
Using \eqref{eq;Hx} in \eqref{p2empdist} the integral in \eqref{p2empdist} has max-linear representation (see e.g. \citet[Lemma 4]{KK})
\begin{align}\label{eq:e}
\mathbb{E}_{{ H}_{\bsx}}[f(\boldsymbol{\omega}) ]& \coloneqq \sum_{k\in V}\| \bsa_{k}\|^\alpha 
f\big( \frac{a_{1k}}{\|\bsa_{k}\|},\dots, \frac{a_{dk}}{\| \bsa_{k}\|}\big).
\end{align}

We shall apply representation \eqref{eq:e} for $\alpha=2$ in Lemma~\ref{p2ineq} below for $f(\boldsymbol{\omega}) = \omega_i^2$ and in Proposition~\ref{p2scalcoll} for $f(\boldsymbol{\omega}) = \bigvee_{i\in I} \omega_i^2$ and more general versions of $f$.

 For illustration purposes we give a simple example of $H_{\bsx}$.
 
 \bexam
 Consider the RMLM 
 \begin{align*}
 \begin{bmatrix}
 X_1 \\X_2
 \end{bmatrix}
 =A\times_{\max}\boldsymbol{Z}=
 \begin{bmatrix} 
 0.8 & 0.26 \\
 0 & 0.43 
 \end{bmatrix}\times_{\max}
  \begin{bmatrix}
 Z_1 \\Z_2
 \end{bmatrix}
\end{align*}
Take $Z_1,Z_2\in \RV_+^1(\alpha)$ and the Euclidean norm, then we can depict two atoms of the angular measure $\bsx$ on the unit sphere $\Theta_+^1$ in $\R_+^2$: 
\begin{figure}[H]
\begin{center}
    \begin{tikzpicture}[
        > = {Straight Barb},
arr/.style = {-Stealth, semithick}]
\coordinate (A) at (0,0);
\coordinate (B) at (4,0);
\coordinate (C) at (0,4);
\coordinate (D) at (1.02,1.72);

\draw [->] (A)--(B) node [pos=0.9, below] {${\omega_1}$};
\draw [->] (A)--(C) node [pos=0.9, left] {${\omega_2}$};
\draw [->] (A)--(B) node [pos=0.5, below] {$\frac{\boldsymbol{a_1}}{\norm{\boldsymbol{a_1}}}$};
\draw [-] (A)--(D) node [pos=1.25] {$\frac{\boldsymbol{a_2}}{\norm{\boldsymbol{a_2}}}$};
\draw [-, thick] (A)--(D) node [fill=red!50,circle, scale=0.5, pos=1]{};
\draw [-, thick] (A)--(B) node [fill=red!50,circle, scale=0.5, pos=0.5]{};
\pic[draw, -,
     angle radius = 20mm ] {angle = B--A--C};     
    \end{tikzpicture}
    \end{center} 
    \caption{The two red points depict the atoms $\boldsymbol a_1/\norm{\boldsymbol a_1}=(1,0)$ and $\boldsymbol a_2/\norm{\boldsymbol a_2}={(0.52, 0.86)}$ of the angular measure $H_{\boldsymbol X}$ of the RMLM $(X_1, X_2)$ on the unit sphere $\Theta_+^1=\{(\omega_1,\omega_2): \omega_1^2+\omega_2^2=1\}$.}
   \label{fig:atoms} \halmos
\end{figure}
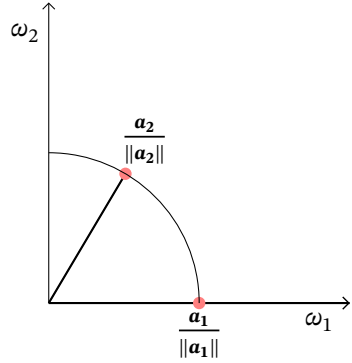
 \eexam

As is common in extreme value theory, we further standardise the RMLM $\boldsymbol{X}$ by standardizing the max-linear coefficient matrix $A$. 
This implies that we are working directly with 
$\ov{A}$, i.e., $\boldsymbol{X}=\ov{A}\times_{\max}\boldsymbol{Z}$.
The standardised max-linear coefficient matrix with entries 
$$(\ov{A})_{ij}=\frac{a_{ij}}{\big(\sum_{k=1}^d {a_{ik}^2}\big)^{1/2}}$$ 
has useful properties. 
However, standardisation also changes the angular measure $H_{\bsx}$ in \eqref{eq;Hx}, thus affecting the dependence structure of $\boldsymbol X$. 
The relevant point in the present paper is the minimum max-linear DAG $\D^A$, given in Definition~\ref{defDA},  which is invariant to standardization.
This is easily seen via the computation of the respective DAGs $D^{A}$ and $D^{\ov A}$ from the matrices $A$ and $\ov A$, leading to $D^{A}=D^{\ov A}$.

The next lemma shows that the {standardisation} entails unit scalings for the components and that the diagonal entries of $\ov{A}$ are the largest in their respective columns, which is important in establishing asymmetries in the scaling methodology in Theorems \ref{p2sourcenodes}  and \ref{consT}.
We summarise these properties of $\ov A$ as follows.

\begin{lemma}\label{p2ineq}
Let $\bsx=\ov A\times_{\max}\boldsymbol{Z}$  be a RMLM, where $\boldsymbol{Z}\in \RV^{d}_+(2)$ has independent and standardised components and with standardised max-linear coefficient matrix $\ov{A}\in \mathbb{R}_+^{d\times d}$. Then\\
(i) [\cite{KK},  Prop. 1] $\sigma_i^2=\sum_{j\in V} \ov a_{ij}^2=1$ for  $i\in V$;\\
(ii) [\cite{GKO}, Lemma A.1]
 $\ov{a}_{jj}>\ov{a}_{ij}$ for $i\neq j$.
\end{lemma}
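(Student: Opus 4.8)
The plan is to obtain both statements from the max-linear representation \eqref{eq:e} of the moments of the angular measure, combined with the path structure of the coefficient matrix.

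For part (i) I would apply \eqref{eq:e} to the standardised model $\bsx=\ov A\times_{\max}\bsz$ with $\alpha=2$ and the test function $f(\boldsymbol\omega)=\omega_i^2$. The atoms of $H_{\bsx}$ are then $\ov{\bsa}_k/\|\ov{\bsa}_k\|$ carrying mass $\|\ov{\bsa}_k\|^2$, so the mass cancels the denominator coming from $f$ and leaves
\[
\sigma_i^2=\sum_{k\in V}\|\ov{\bsa}_k\|^2\Big(\frac{\ov a_{ik}}{\|\ov{\bsa}_k\|}\Big)^2=\sum_{k\in V}\ov a_{ik}^2 ,
\]
the squared Euclidean norm of the $i$-th row of $\ov A$. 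By the definition of the standardisation this norm equals $1$, which gives (i) at once.

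For part (ii) I would first discard the case $j\notin\An(i)$, in which $a_{ij}=0$ and hence $\ov a_{ij}=0<\ov a_{jj}$. So assume $j\in\an(i)$. Since all entries are nonnegative and the two row norms are strictly positive, the claim $\ov a_{jj}>\ov a_{ij}$ is equivalent, after squaring and clearing denominators, to
\[
a_{jj}^2\sum_{k\in V}a_{ik}^2 \;>\; a_{ij}^2\sum_{k\in V}a_{jk}^2 .
\]
The structural ingredient is the path-composition inequality $a_{jj}\,a_{ik}\ge a_{ij}\,a_{jk}$, valid for every $k\in V$; this is the inequality of Corollary~4.3 of \cite{gk} taken through the intermediate node $j$, since a max-weighted path $k\rightsquigarrow j$ concatenated with a max-weighted path $j\rightsquigarrow i$ is a path $k\rightsquigarrow i$ of weight $a_{ij}a_{jk}/a_{jj}$, so that $a_{ik}\ge a_{ij}a_{jk}/a_{jj}$. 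Squaring this and summing over $k\in V$ produces the displayed inequality in its non-strict form.

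The main obstacle is promoting this to a \emph{strict} inequality, and this is exactly where acyclicity enters. I would isolate the summand $k=i$: because $j$ is a proper ancestor of $i$ in a DAG there can be no path $i\rightsquigarrow j$, whence $a_{ji}=0$, whereas $a_{jj}a_{ii}>0$, the diagonal entries $a_{ii}=c_{ii}$ being positive. Thus the $k=i$ term is strictly positive on the left and vanishes on the right, so the summed inequality becomes strict. Taking square roots and dividing back by the (positive) row norms restores $\ov a_{jj}>\ov a_{ij}$, which proves (ii).
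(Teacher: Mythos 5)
Your proposal is correct, and it is worth noting that the paper itself does not prove this lemma at all: it delegates part (i) to Proposition~1 of \cite{KK} and part (ii) to Lemma~A.1 of \cite{GKO}, so your argument is a self-contained replacement rather than a parallel of an in-paper proof. For (i), your computation is exactly the one the paper uses elsewhere (it is the proof of Proposition~\ref{p2scalcoll}(i) with $I=\{i\}$, i.e., representation \eqref{eq:e} with $\alpha=2$ and $f(\boldsymbol\omega)=\omega_i^2$), so there is nothing to add there. For (ii), your route is sound and genuinely fills in what the paper outsources: the key inequality $a_{jj}a_{ik}\ge a_{ij}a_{jk}$ follows from the concatenation argument you give (in a DAG the concatenation of a max-weighted path $k\rightsquigarrow j$ with a max-weighted path $j\rightsquigarrow i$ is again a path, with weight $a_{jk}a_{ij}/a_{jj}$ because the factor $c_{jj}=a_{jj}$ is counted only once), and strictness comes precisely from acyclicity via $a_{ji}=0$ while $a_{jj}a_{ii}>0$ at the summand $k=i$. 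Two small points of hygiene: the appeal to Corollary~4.3 of \cite{gk} is loose, since that statement (as quoted in the paper) concerns $j\in\pa(i)$ and intermediate nodes $k\in\des(j)\cap\pa(i)$, whereas you need the inequality for $j\in\an(i)$ and arbitrary $k$ — but your concatenation argument proves it directly, so the citation is dispensable; and the degenerate cases $k=j$ and $k\notin\An(j)$ (where $a_{jk}=0$) should be flagged as trivially satisfying the inequality, which they do. Neither point is a gap.
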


The following model assumptions are used throughout the remainder of the paper.\vspace{2mm}\\
{ \textbf {Assumptions A:}}\label{p2modelassump}
\begin{enumerate}
	\item[(A1)]
	The innovations vector $\boldsymbol{Z}\in {\RV}^{d}_+(2)$ has independent and standardised components.
	\item[(A2)]
	The choice of norm is the Euclidean norm, denoted by $\|\cdot\|$.
	\item[(A3)]
	The components of the RMLM $\bsx=A\times_{\max}\boldsymbol{Z}$ are standardised with scalings $\sigma_i=1$ for all $i\in V$, a consequence of $A$ being standardised.
\end{enumerate}

In what follows we use maxima of selected components and scaled components of $\bsx$. 
Following \cite{foug} and previous authors we call such random variables max-projections, although there is no relation to predictions of linear models.

In Section~\ref{p2slearn} we shall apply different scalings between the max-projections of $\bsx$ on some component subset of $V$ and partly rescaled components to find a causal order of nodes.
We define $M_I:= \bigvee_{k\in I} X_k$ for $I\subseteq V$, abbreviate $M:=M_V$,
and for
$i,j \in I^c=V\setminus I$ and $a\ge 1$,
\begin{align}\label{Mscale}
M_{i,aj, aI} \coloneqq X_i\vee a X_j\vee \bigvee_{k\in I} a X_k
= \bigvee_{\ell\in V} \Big(a_{i\ell}  \vee  a a_{j\ell} \vee \bigvee_{k\in I} a a_{k\ell} \Big) Z_\ell,
\end{align}
giving 
$M_{i,aj} =  \bigvee_{\ell\in V} \big(a_{i\ell}  \vee  a a_{j\ell} \big) Z_\ell $ for $I=\emptyset$ and 
$M_{i,j} =  \bigvee_{\ell\in V} \big(a_{i\ell}  \vee  a_{j\ell} \big) Z_\ell $ for $a=1$.

In preparation for the next section, we summarise some important properties of the scalings of max-projections.
In particular, we make use of maxima over rescaled components of the vector $\boldsymbol{X}$ under Assumptions~A. 
Lemma 6 of \citet{KK} proves formulas (i) and (ii) of Proposition~\ref{p2scalcoll}; we give the arguments and prove~(iii)~below. 

\bpr\label{p2scalcoll} 
Let $\boldsymbol{X}$ be a RMLM satisfying Assumptions~A. 
Then the max-projections in \eqref{Mscale} belong to $\RV_+(2)$ with squared scalings\\
(i) $\sigma_{M_{I}}^2 =\sum_{\ell\in V} \bigvee_{i\in I} a_{i\ell}^2$ for $I\subsetneq V$; \\
(ii) $\sigma_{M}^2 = \sum_{\ell\in V}  a_{\ell\ell}^2$;\\
(iii) $\sigma^2_{M_{i,aj, aI}} = \sum_{\ell\in  I\cup\{j\}}  a^2 a^2_{\ell\ell}
+ \sum_{\ell\in  (I\cup\{j\})^c}\big(a^2_{i\ell}\vee \bigvee_{k\in I\cup\{j\}} a^2 a^2_{k\ell}\big)$ for $a\ge 1$ and $i,j\in I^c, i\neq j$.
\epr

\begin{proof}
(i)  Note that $M_I=\bigvee_{i\in I} X_i=\bigvee_{\ell\in V} \bigvee_{i\in I} a_{i\ell} Z_\ell$  has the same structure as $X_i=\bigvee_{\ell\in V} a_{i\ell} Z_\ell$ with $a_{i\ell}$ replaced by $\bigvee_{i\in I} a_{i\ell}$.
Thus, choosing $f(\omega_1,\dots,\omega_d) = \bigvee_{k\in I} \omega_k^2$ gives \\[2mm]
\centerline{$\sigma_{M_{I}}^2 = \int_{\Theta_+^{d-1}} \bigvee_{i\in I} \omega_i^2 dH_{\bsx}(\boldsymbol\omega)
=  \sum_{\ell\in V} \| \bsa_{\ell}\|^2 \big( \bigvee_{i\in I} \frac{a^2_{i\ell}}{\| \bsa_{\ell}\|^2}\big)
 = \sum_{\ell\in V} \bigvee_{i\in I} a_{i\ell}^2$; }\\
 (ii) is a consequence of Lemma \ref{p2ineq}(ii);\\
 (iii) is proven with 
 $f(\omega_1,\dots,\omega_d)= \omega_i^2 \vee \bigvee_{k\in I\cup\{j\}}a \omega_k^2$. Using (i) gives 
 \begin{align*}
\sigma^2_{M_{i,aj, aI}} &= \sum_{\ell\in V}\Big(a^2_{i\ell}\vee \bigvee_{k\in I\cup\{j\}} a^2 a^2_{k\ell}\Big)\\
&= \sum_{\ell\in  I\cup\{j\}}\Big(a^2_{i\ell}\vee \bigvee_{k\in I\cup\{j\}} a^2 a^2_{k\ell}\Big) 
+ \sum_{\ell\in  (I\cup\{j\})^c}\Big(a^2_{i\ell}\vee \bigvee_{k\in I\cup\{j\}} a^2 a^2_{k\ell}\Big)\\
&=  \sum_{\ell\in  I\cup\{j\}}  a^2 a^2_{\ell\ell}
+ \sum_{\ell\in  (I\cup\{j\})^c}\Big(a^2_{i\ell}\vee \bigvee_{k\in I\cup\{j\}} a^2 a^2_{k\ell}\Big)
 \end{align*}
where we have used  Lemma \ref{p2ineq}(ii) for the last equality.
\end{proof}

The next lemma provides ingredients for the proof of Theorem~\ref{p2sourcenodes} below. 

\begin{lemma}\label{le:order}
Let $\bsx$ be a RMLM on a well-ordered DAG satisfying Assumptions A.
Let $I\subseteq V$ denote a set of nodes satisfying $\An(I)\cap I^c=\emptyset$. 
Then the following identities hold for $i,j\notin I$, $i\neq j$
\begin{align}
\sigma^2_{M_{i,aj}}-\sigma_{M_{ij}}^2 
&= (a^2-1) a_{jj}^2+ \sum_{\ell\in\an(j)} \big(a_{i\ell}^2\vee a^2 a_{j\ell}^2 -a_{i\ell}^2\vee  a_{j\ell}^2\big),\label{eq:iaj} \\
\sigma^2_{M_{i,aj, aI}} - \sigma^2_{M_{i,j, I}} 
&= (a^2-1)\sum_{\ell\in  I\cup\{j\}}  a^2_{\ell\ell} + \sum_{\ell\in  I^c\cap\an(j)} \big(a^2_ {i\ell}\vee a^2 a^2_{j\ell} - a^2_ {i\ell}\vee  a^2_{j\ell}\big).\label{eq:b}
\end{align}
\end{lemma}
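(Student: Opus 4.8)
The plan is to derive both identities from Proposition~\ref{p2scalcoll}(iii) together with the single graph-theoretic consequence of the hypothesis $\An(I)\cap I^c=\emptyset$. First I would observe that \eqref{eq:iaj} is merely the case $I=\emptyset$ of \eqref{eq:b}: with $I=\emptyset$ one has $I\cup\{j\}=\{j\}$, so $(a^2-1)\sum_{\ell\in I\cup\{j\}}a_{\ell\ell}^2=(a^2-1)a_{jj}^2$ and $I^c\cap\an(j)=\an(j)$, while the condition $\An(I)\cap I^c=\emptyset$ holds vacuously. Hence it suffices to establish \eqref{eq:b}, and \eqref{eq:iaj} follows as a special case.

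To prove \eqref{eq:b}, I would apply Proposition~\ref{p2scalcoll}(iii) twice, once for general $a\ge 1$ and once for $a=1$ (which yields $\sigma^2_{M_{i,j,I}}$), and subtract. The diagonal contributions separate off immediately as $(a^2-1)\sum_{\ell\in I\cup\{j\}}a_{\ell\ell}^2$, reproducing the first term on the right-hand side of \eqref{eq:b}. What remains is the sum over $\ell\in(I\cup\{j\})^c$ of the differences $\big(a_{i\ell}^2\vee\bigvee_{k\in I\cup\{j\}}a^2a_{k\ell}^2\big)-\big(a_{i\ell}^2\vee\bigvee_{k\in I\cup\{j\}}a_{k\ell}^2\big)$, and the whole task reduces to simplifying these inner maxima.

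The key step, and the only genuinely non-algebraic one, is to collapse the inner maximum using the ancestral structure of $I$. Since $\An(I)\cap I^c=\emptyset$ is equivalent to $\An(I)\subseteq I$, no node outside $I$ can be an ancestor of any node of $I$. Recalling that $a_{k\ell}>0$ holds exactly when $\ell\in\An(k)$, I would argue that for $\ell\in I^c$ and $k\in I$ one has $\ell\notin\An(k)$ (otherwise $\ell\in\An(I)\subseteq I$, a contradiction), whence $a_{k\ell}=0$. Therefore, for every $\ell\in(I\cup\{j\})^c\subseteq I^c$, the maximum $\bigvee_{k\in I\cup\{j\}}a_{k\ell}^2$ collapses to $a_{j\ell}^2$, so the remaining sum becomes $\sum_{\ell\in(I\cup\{j\})^c}\big(a_{i\ell}^2\vee a^2a_{j\ell}^2-a_{i\ell}^2\vee a_{j\ell}^2\big)$.

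Finally I would restrict the index set: each summand vanishes whenever $a_{j\ell}=0$, i.e.\ whenever $\ell\notin\An(j)$, since then both maxima equal $a_{i\ell}^2$. As $\ell\neq j$ throughout this sum, the surviving indices are precisely $\ell\in I^c\cap\an(j)$, which yields the second term of \eqref{eq:b}. I expect the main obstacle to be exactly this graph-theoretic collapse of the inner maximum; once the hypothesis $\An(I)\cap I^c=\emptyset$ is correctly turned into the vanishing of the cross terms $a_{k\ell}$ for $k\in I$ and $\ell\in I^c$, everything else is bookkeeping with the max operation and a direct application of Proposition~\ref{p2scalcoll}(iii).
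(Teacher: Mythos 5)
Your proposal is correct and follows essentially the same route as the paper's proof: reduce \eqref{eq:iaj} to the case $I=\emptyset$ of \eqref{eq:b}, apply Proposition~\ref{p2scalcoll}(iii) at $a$ and at $a=1$, split off the diagonal terms, and collapse the inner maximum using that $\An(I)\cap I^c=\emptyset$ forces $a_{k\ell}=0$ for $k\in I$, $\ell\in I^c$ (the paper states this dually as $I\cap\des(\ell)=\emptyset$ for $\ell\in I^c$), before restricting the surviving indices to $I^c\cap\an(j)$.
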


\begin{proof} 
Note that \eqref{eq:iaj} is a special case of \eqref{eq:b} for $I=\emptyset$, hence we only prove \eqref{eq:b}.
We use Proposition~\ref{p2scalcoll}(iii) and obtain, since $a_{k\ell}=0$ for $\ell\notin\an(k)$,
\begin{align*}
&\sigma^2_{M_{i,aj, aI}} - \sigma^2_{M_{i,j, I}} \\
&= (a^2-1)\sum_{\ell\in I\cup\{j\}}  a^2_{\ell\ell}
 + \sum_{\ell\in  (I\cup\{j\})^c}\Big( \big(a^2_{i\ell}\vee \bigvee_{k\in (I\cup\{j\})\cap\des(\ell)} a^2 a^2_{k\ell}\big)
- \big(a^2_{i\ell}\vee \bigvee_{k\in (I\cup\{j\})\cap\des(\ell)} a^2_{k\ell}\big)\Big)\\
&= (a^2-1)\sum_{\ell\in I\cup\{j\}}  a^2_{\ell\ell}
 + \sum_{\ell\in  (I\cup\{j\})^c}\Big( \big(a^2_{i\ell}\vee \bigvee_{k\in \{j\}\cap\des(\ell)} a^2 a^2_{k\ell}\big)
- \big(a^2_{i\ell}\vee \bigvee_{k\in \{j\}\cap\des(\ell)} a^2_{k\ell}\big)\Big)\\
&= (a^2-1)\sum_{\ell\in I\cup\{j\}}  a^2_{\ell\ell}
 + \sum_{\ell\in I^c\cap\an(j)}\Big( \big(a^2_{i\ell}\vee a^2 a^2_{j\ell}\big)
- \big(a^2_{i\ell}\vee  a^2_{j\ell}\big)\Big),
\end{align*}
where we have used that {$I\cap\des(\ell)=\emptyset$ for $\ell\in I^c$} 
in the second equality.
\end{proof}

\brem
In \cite{K} a recursive linear model (RLM) is investigated. It is given by
\begin{align}\label{Rsumequat1}
X_i={{\underset{j\in \pa(i)}{\sum}}} c_{ij} X_j + s_{ii} Z_i,\hspace{5mm} i\in V,
\end{align}
for independent innovations $Z_1,\dots,Z_d$ which are copies of $Z\in\mathrm{RV}_+(\alpha)$ and are supported on $\mathbb{R}_+.$  
The edge weights $c_{ij}\ge 0$ for $i\neq j$ and the diagonal elements $s_{ii}>0$. The edge weights matrix $C$ is assumed to be strictly upper triangular, $S$ is a diagonal matrix and $I$ denotes the identity matrix.
Then the unique solution to \eqref{Rsumequat1} is given by matrix inversion to $\boldsymbol{X}=(I-C)^{-1}S\boldsymbol{Z}=A\boldsymbol{Z}$, such that
\begin{align}\label{Rsumequat}
X_i={{\underset{j\in \An(i)}{\sum}}} a_{ij}Z_j,\hspace{5mm} i\in V.
\end{align}
For i.i.d. regularly varying innovations, the RMLM and the RLM have a discrete limiting angular measure $H_{\bsx}$, but the angular components of the two models behave slightly different as they approach the limit. Furthermore, \cite{K} shows that the standardised coefficient matrix of a RLM also satisfies Lemma~\ref{p2ineq}(ii);
hence, Proposition~\ref{p2scalcoll} and Lemmas~\ref{p2ineq} and \ref{le:order} apply to both models.
\halmos
\erem

\section{Structure learning }\label{p2slearn}

We assume that the node variables of the RMLM $\boldsymbol{X}=A \times_{\max}\boldsymbol{Z}$ are arbitrarily ordered on a DAG. 
Structure learning, or causal discovery, aims at determining a causal order of the node variables.
Recall from Remark 2.3 of \cite{gk} that the max-linear coefficient matrix $A$ of a RMLM on a well-ordered DAG is upper triangular.

In what follows, max-projections of scaled and non-scaled components of $\bsx$ provide the means to find a causal order of the nodes.
To set the stage we start with the simple example of two nodes.\\

\bexam\label{ex:scalings}
The three possible DAGs with node set $V=\{1,2\}$ and max-linear coefficient matrix $A$ are 

$$
\begin{array}{cc}
\begin{array}{ccc}
 {\vspace{2cm}\begin{tikzpicture}[
				> = stealth, 
				shorten > = 1pt, 
				auto,
				node distance = 2cm, 
				semithick 
				]
				\tikzstyle{every state}=[
				draw = black,
				thick,
				fill = white,
				minimum size = 4mm
				]
				\node[state] (1) {$1$};
				\node[state] (2) [right of=1] {$2$};
                    \path[->][blue] (1) edge node {} (2);
                    \node (4) [below right = .2cm and .5cm of 1] {$\D_1$};
				\end{tikzpicture}}
& \quad\quad
                 \begin{tikzpicture}[
				> = stealth, 
				shorten > = 1pt, 
				auto,
				node distance = 2cm, 
				semithick 
				]
				\tikzstyle{every state}=[
				draw = black,
				thick,
				fill = white,
				minimum size = 4mm
				]
				\node[state] (1) {$1$};
				\node[state] (2) [right of=1] {$2$};
                    \path[->][blue] (2) edge node {} (1);
                    \node (4) [below right = .2cm and .5cm of 1] {$\D_2$};
				\end{tikzpicture}
& \quad\quad
                 \begin{tikzpicture}[
				> = stealth, 
				shorten > = 1pt, 
				auto,
				node distance = 2cm, 
				semithick 
				]
				\tikzstyle{every state}=[
				draw = black,
				thick,
				fill = white,
				minimum size = 4mm
				]
				\node[state] (1) {$1$};
				\node[state] (2) [right of=1] {$2$};
                    \node (4) [below right = .2cm and .5cm of 1] {$\D_3$};
				\end{tikzpicture}
\end{array}                
& \quad\quad
A={\renewcommand{\arraystretch}{1.2}
				\begin{bmatrix}
				a_{11}&a_{12} \tikzmark{lineone}\\
				a_{21}&a_{22}\tikzmark{linetwo}\\
				\end{bmatrix}},   
                \end{array}
				$$                  
where $a_{12}=0$ for $\D_1$, $a_{21}=0$ for $\D_2$, and $a_{12}=a_{21}=0$ for $\D_3$.
          
Now consider $M_{1,a2}=X_1\vee a X_2$ and obtain from \eqref{Mscale} the representation 
$M_{1,a2} =  \bigvee_{\ell\in V} \big(a_{1\ell}  \vee  a a_{2\ell} \big) Z_\ell $.
From Proposition~\ref{p2scalcoll}(iii) we get
\begin{align}\label{eq:11a}
\sigma_{M_{1,a2}}^2 - \sigma_{M_{1,2}}^2 &= \sum_{\ell\in \{1,2\}} \big(a_{1\ell}^2  \vee  a^2 a_{2\ell}^2 \big)
- \sum_{\ell\in \{1,2\}} \big(a_{1\ell}^2  \vee  a_{2\ell}^2 \big)\nonumber\\
& = \big(a_{11}^2  \vee  a^2 a_{21}^2 \big) + \big(a_{12}^2  \vee  a^2 a_{22}^2 \big)
- \big(a_{11}^2  \vee  a_{21}^2 \big) - \big(a_{12}^2  \vee  a_{22}^2 \big)\nonumber\\
& = \big(a_{11}^2  \vee  a^2 a_{21}^2 \big) 
- \big(a_{11}^2  \vee  a_{21}^2 \big)  +   (a^2-1) a_{22}^2 ,
\end{align}
since $a>1$ and $a_{22} > a_{12}$ by Lemma~\ref{p2ineq}(ii).

For $\D_1$, equivalently $a_{12}= 0$ and $a_{21}\neq 0$, we find that $\big(a_{11}^2  \vee  a^2 a_{21}^2 \big) 
- \big(a_{11}^2  \vee  a_{21}^2 \big)<(a^2-1) a_{21}^2$, since $a>1$ and $a_{11}>a_{21}>0$. An application of this inequality to \eqref{eq:11a} gives
\begin{align*}
    \sigma_{M_{1,a2}}^2 - \sigma_{M_{1,2}}^2 
& <   (a^2-1)( a_{21}^2 + a_{22}^2)=  (a^2-1)\sigma^2_{2} =a^2-1.
\end{align*}
For $\D_2$, equivalently $a_{12}\neq 0$ and $a_{21}=0$ we obtain
\begin{align*} 
        \sigma_{M_{1,a2}}^2 - \sigma_{M_{1,2}}^2 &=  (a^2-1) a_{22}^2=(a^2-1)\sigma^2_{2}=a^2-1.
\end{align*}
For $\D_3$, equivalently $a_{12}=a_{21}=0$ we obtain
\begin{align*} 
\sigma_{M_{1,a2}}^2 - \sigma_{M_{1,2}}^2 &=  (a^2-1) a_{22}^2=(a^2-1)\sigma^2_{2}=a^2-1.
\end{align*}
Now consider $M_{a1,2}=aX_1\vee X_2$, then by symmetry and since $a>1$, $a_{11}>a_{21}$ and $a_{22} > a_{12}$, we find 

\begin{align*}
    \sigma_{M_{a1,2}}^2 - \sigma_{M_{1,2}}^2 & =  (a^2-1)\sigma^2_{1} =a^2-1\quad\mbox{for $\D_1$} \\
    \sigma_{M_{a1,2}}^2 - \sigma_{M_{1,2}}^2 & <  (a^2-1)\sigma^2_{1}=a^2-1 \quad\mbox{for $\D_2$}\\
    \sigma_{M_{a1,2}}^2 - \sigma_{M_{1,2}}^2 & =(a^2-1)\sigma^2_{1}=a^2-1 \quad\mbox{for $\D_3$}.
\end{align*}
This allows us to identify the causal direction of a DAG with two nodes from the scalings.
\eexam

Following the ideas from Example~\ref{ex:scalings}, we use the scalings to find a causal order of the nodes. 
This is achieved by first identifying the source nodes, which can be ordered arbitrarily within all source nodes. 
The same applies to every generation of descendants corresponding to the iteration steps in Algorithm~\ref{p2causordalg}: the order of nodes found by each iteration step is arbitrary. 

Several recursive structure learning algorithms are proposed in \cite{KK, KDK, K}.
They first identify the source nodes of the DAG and then identify the descendants. 
Pros and cons of the different algorithms are discussed in Section 4 of \cite{K}.

We now state the main theorem of this section, which
exploits the asymmetry between partly scaled versions of max-projections of $\bsx$ to provide a criterion for ordering the nodes in a RMLM.
It combines Theorems~1 and~2 of \cite{K}. 
The proof, given in Appendix~\ref{A:proof}, uses Lemma~\ref{p2ineq}, Proposition~\ref{p2scalcoll} and Lemma~\ref{le:order} together with arguments shown in the proofs of these auxiliary results. 

\begin{theorem}\label{p2sourcenodes} 
	Let $\bsx$ be a RMLM on a DAG satisfying Assumptions A.
	
    (a) Node $j$ is a source node if and only if for arbitrary $a>1$, 
	\begin{align}\label{p2critinit}
	\sigma^2_{M_{i,aj}}-\sigma_{M_{ij}}^2 = (a^2-1)\sigma_i^2=a^2-1 \quad \text{for all}\quad i\neq j.
	\end{align} 
	 If $j$ is not a source node, then $\sigma^2_{M_{i,aj}}-\sigma_{M_{ij}}^2 \leq a^2-1$ for all $i\neq j$; the inequality is strict if $i\in\an(j).$
	
	(b) 
	Let $O$ denote the set of  ordered nodes having no ancestors in~$O^c$; i.e., $\An(O)\cap O^c=\emptyset$. 
	Then $j\in O^c$ has no ancestors outside~$O$, i.e., $\an(j)\cap O^c=\emptyset$, 
	if and only if for arbitrary $a>1$,
	\begin{equation}
	\sigma_{M_{i, aj, aO}}^2-\sigma_{M_{i, j, O}}^2=(a^2-1)\sigma_{M_{j, O}}^2\quad \text{for all}\quad  i\notin O\cup\{j\}.
	\label{p2critpair}
	\end{equation}
    If $j\in O^c$ has an ancestor outside~$O$, then $\sigma^2_{M_{i,aj}}-\sigma_{M_{ij}}^2 \leq (a^2-1)\sigma_{M_{j, O}}^2$ for all $i\notin O\cup\{j\}$; the inequality is strict if $i\in O^c\cap\an(j)$.\\
	If two different nodes $j_1, j_2\in O^c$ satisfy (\ref{p2critpair}), then $j_1\notin\an(j_2)$ and $j_2\notin\an(j_1)$.
\end{theorem}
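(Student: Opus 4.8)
The plan is to derive all four claims from the two identities in Lemma~\ref{le:order}, reducing each one to a single pointwise estimate on the summands combined with the diagonal-dominance property in Lemma~\ref{p2ineq}(ii).

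For part~(a) I would take $I=\emptyset$ in Lemma~\ref{le:order}, so that
\[\sigma^2_{M_{i,aj}}-\sigma_{M_{ij}}^2 = (a^2-1)a_{jj}^2 + \sum_{\ell\in\an(j)}\bigl(a_{i\ell}^2\vee a^2a_{j\ell}^2 - a_{i\ell}^2\vee a_{j\ell}^2\bigr).\]
The core of the argument is the elementary bound
\[a_{i\ell}^2\vee a^2a_{j\ell}^2 - a_{i\ell}^2\vee a_{j\ell}^2 \le (a^2-1)a_{j\ell}^2,\]
proved by distinguishing the regimes $a_{i\ell}\ge a\,a_{j\ell}$, $a_{j\ell}\le a_{i\ell}<a\,a_{j\ell}$, and $a_{i\ell}<a_{j\ell}$; equality occurs exactly in the last regime (and trivially when $a_{j\ell}=0$), whereas the bound is strict whenever $a_{i\ell}>a_{j\ell}>0$. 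Summing over $\ell\in\an(j)$ and using $\sum_{\ell\in\An(j)}a_{j\ell}^2=\sigma_j^2=1$ from Lemma~\ref{p2ineq}(i) gives $\sigma^2_{M_{i,aj}}-\sigma_{M_{ij}}^2\le a^2-1$ unconditionally. When $j$ is a source node, $\an(j)=\emptyset$ and $a_{jj}^2=\sigma_j^2=1$, so the sum disappears and the stated equality holds.

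The converse and the strictness both rest on Lemma~\ref{p2ineq}(ii): if $i\in\an(j)$, then choosing $\ell=i$ gives $a_{i\ell}=a_{ii}>a_{ji}=a_{j\ell}$, and $a_{ji}>0$ because $i\in\an(j)$ forces a max-weighted path, so that term lies strictly below $(a^2-1)a_{ji}^2$ and the whole inequality becomes strict. Consequently equality can hold for \emph{all} $i\ne j$ only if $\an(j)=\emptyset$, i.e.\ $j$ is a source node, which closes the equivalence. Part~(b) follows the same template with $I=O$, whose hypothesis $\An(O)\cap O^c=\emptyset$ is precisely the one required by Lemma~\ref{le:order}. The one genuinely new computation is to identify the target on the right-hand side: using Proposition~\ref{p2scalcoll}(i) together with ancestral closedness of $O$ and diagonal dominance, I would establish
\[\sigma_{M_{j,O}}^2 = \sum_{\ell\in O\cup\{j\}}a_{\ell\ell}^2 + \sum_{\ell\in\an(j)\cap O^c}a_{j\ell}^2,\]
since for $\ell\in O\cup\{j\}$ the diagonal entry dominates its column, while for $\ell\notin O\cup\{j\}$ every $a_{k\ell}$ with $k\in O$ vanishes by ancestral closedness, leaving only the contribution of $j$ on $\an(j)\cap O^c$. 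Substituting this into the pointwise bound reproduces $\sigma^2_{M_{i,aj,aO}}-\sigma^2_{M_{i,j,O}}\le (a^2-1)\sigma_{M_{j,O}}^2$, with equality exactly when $\an(j)\cap O^c=\emptyset$ and strict inequality when $i\in O^c\cap\an(j)$, again by the $\ell=i$ argument. The final assertion is then immediate: if $j_1$ and $j_2$ both satisfy \eqref{p2critpair}, the equivalence yields $\an(j_1)\cap O^c=\an(j_2)\cap O^c=\emptyset$, so since $j_1,j_2\in O^c$ neither can be an ancestor of the other.

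I expect the main obstacle to be organizational rather than conceptual: keeping the index sets straight when computing $\sigma_{M_{j,O}}^2$ and checking that the summand bound matches this quantity term by term, so that tightness corresponds precisely to $\an(j)\cap O^c=\emptyset$. (I would also flag what looks like a typo in the inequality line of part~(b): the left-hand side should read $\sigma^2_{M_{i,aj,aO}}-\sigma_{M_{i,j,O}}^2$ rather than $\sigma^2_{M_{i,aj}}-\sigma_{M_{ij}}^2$.) Beyond this, the whole proof reduces to the three-case estimate and a single application of diagonal dominance at $\ell=i$.
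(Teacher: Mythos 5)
Your proposal is correct and takes essentially the same route as the paper's own proof: both derive everything from Lemma~\ref{le:order} with $I=\emptyset$ and $I=O$, bound the summands termwise by $(a^2-1)a_{j\ell}^2$, obtain strictness at $\ell=i$ from the diagonal dominance in Lemma~\ref{p2ineq}(ii) together with $a_{ji}>0$ for $i\in\an(j)$, and use the same identity $\sigma_{M_{j,O}}^2=\sum_{\ell\in O\cup\{j\}}a_{\ell\ell}^2+\sum_{\ell\in O^c\cap\an(j)}a_{j\ell}^2$. Your flagged typo in the inequality line of part~(b) is real: the paper's proof indeed bounds $\sigma^2_{M_{i,aj,aO}}-\sigma^2_{M_{i,j,O}}$, not $\sigma^2_{M_{i,aj}}-\sigma^2_{M_{ij}}$.
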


The following is a consequence of Lemma~\ref{le:order} and Theorem~\ref{p2sourcenodes}.

\begin{corollary}\label{cor:1}
(a) Node $j$ is a source node if and only if for {arbitrary} $a>1$
\begin{align}\label{eq:13}
    \sum_{\ell\in V\setminus\{j\}} \big(a_{i\ell}^2\vee a^2 a_{j\ell}^2 -a_{i\ell}^2\vee  a_{j\ell}^2\big) = 0 \quad \text{for all}\quad i\neq j.
\end{align}
(b) Let $O$ denote the set of ordered nodes having no ancestor in $O^c$. 
Then $j\in O^c$ has no ancestor outside $O$ if and only if for {arbitrary} $a>1$
\begin{align}\label{eq:14}
\sum_{\ell\in  O^c\setminus\{j\}} \big(a^2_ {i\ell}\vee a^2 a^2_{j\ell} - a^2_ {i\ell}\vee  a^2_{j\ell}\big)= (a^2-1) \sum_{\ell\in  O^c\setminus\{j\}}   a^2_{j\ell} \quad \text{for all}\quad i\notin O\cup\{j\}.
\end{align}
\end{corollary}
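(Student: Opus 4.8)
The plan is to read off both identities from the scaling criteria of Theorem~\ref{p2sourcenodes} by substituting the closed forms of Lemma~\ref{le:order} and cancelling the terms common to the two sides; since the relevant quantities are invariant under relabelling, I may assume the DAG is well-ordered so that Lemma~\ref{le:order} applies. I would prove (b) first, as the general statement, and then obtain (a) as the special case $O=\emptyset$.

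For (b) I set $I=O$ in \eqref{eq:b}, which is legitimate because the hypothesis $\An(O)\cap O^c=\emptyset$ is exactly the ancestral-closedness required by Lemma~\ref{le:order}. Feeding \eqref{eq:b} into the criterion $\sigma^2_{M_{i,aj,aO}}-\sigma^2_{M_{i,j,O}}=(a^2-1)\sigma^2_{M_{j,O}}$ of Theorem~\ref{p2sourcenodes}(b), the only substantive computation is the scaling $\sigma^2_{M_{j,O}}=\sigma^2_{M_{O\cup\{j\}}}$. By Proposition~\ref{p2scalcoll}(i) it equals $\sum_{\ell\in V}\bigvee_{k\in O\cup\{j\}}a_{k\ell}^2$; ancestral-closedness gives $a_{k\ell}=0$ for $k\in O$ and $\ell\in O^c$, and the diagonal dominance $\ov a_{\ell\ell}>\ov a_{k\ell}$ of Lemma~\ref{p2ineq}(ii) collapses the column maximum over $O\cup\{j\}$ to $a_{\ell\ell}^2$ whenever $\ell\in O$. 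This yields $\sigma^2_{M_{j,O}}=\sum_{\ell\in O\cup\{j\}}a_{\ell\ell}^2+\sum_{\ell\in O^c\setminus\{j\}}a_{j\ell}^2$. Multiplying by $(a^2-1)$ and cancelling the common term $(a^2-1)\sum_{\ell\in O\cup\{j\}}a_{\ell\ell}^2$ against the first summand of \eqref{eq:b}, and rewriting $\sum_{\ell\in O^c\cap\an(j)}$ as $\sum_{\ell\in O^c\setminus\{j\}}$ (using $a_{j\ell}=0$ off $\an(j)$), leaves exactly \eqref{eq:14}; the biconditional is then inherited verbatim from Theorem~\ref{p2sourcenodes}(b).

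For (a) I take $O=\emptyset$, so that the criterion of (b) becomes $\sum_{\ell\in V\setminus\{j\}}(a_{i\ell}^2\vee a^2a_{j\ell}^2-a_{i\ell}^2\vee a_{j\ell}^2)=(a^2-1)\sum_{\ell\in V\setminus\{j\}}a_{j\ell}^2$ and ``no ancestor outside $O$'' reads ``$j$ is a source node''. To reach the right-hand side $0$ of \eqref{eq:13} I would use that each summand is non-negative (as $a\ge1$) and bounded above by $(a^2-1)a_{j\ell}^2$, the bound being attained only when $a_{i\ell}\le a_{j\ell}$. Testing the reduced equation at $i=\ell$ for any $\ell\in\an(j)$ then contradicts $a_{\ell\ell}>a_{j\ell}$ (Lemma~\ref{p2ineq}(ii)); hence the equation can hold for all $i\neq j$ only when $\an(j)=\emptyset$, in which case $a_{j\ell}=0$ for all $\ell\neq j$, both sides vanish, and the criterion is precisely \eqref{eq:13}. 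Conversely, a source node makes every summand zero, so \eqref{eq:13} holds.

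I expect the main obstacle to be the bookkeeping in the $\sigma^2_{M_{j,O}}$ computation and the matching cancellations for (b): one must invoke ancestral-closedness of $O$ and the diagonal dominance of $\ov A$ at precisely the right places to see that the $a_{\ell\ell}^2$-terms cancel and that the surviving index sets line up with \eqref{eq:14}. The only genuinely non-mechanical point is the passage to the right-hand side $0$ in (a): unlike (b) it is not a pure substitution but needs the non-negativity and upper-bound estimate on the individual summands, together with Lemma~\ref{p2ineq}(ii), to certify that equality for all $i$ can occur only at a source node.
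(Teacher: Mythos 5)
Your treatment of part (b) is correct and is essentially the paper's own route: you substitute \eqref{eq:b} with $I=O$ into the criterion \eqref{p2critpair} of Theorem~\ref{p2sourcenodes}(b), compute $\sigma^2_{M_{j,O}}=\sum_{\ell\in O\cup\{j\}}a^2_{\ell\ell}+\sum_{\ell\in O^c\cap\an(j)}a^2_{j\ell}$ (this is exactly \eqref{eq:MjO} in the paper's appendix), cancel the common diagonal terms, and enlarge the index set $O^c\cap\an(j)$ to $O^c\setminus\{j\}$ using $a_{j\ell}=0$ off $\an(j)$. The biconditional then indeed transfers verbatim, for each fixed $a>1$.

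Part (a), however, has a genuine gap in the ``only if'' direction. Your bound-attainment argument (each summand is at most $(a^2-1)a^2_{j\ell}$, with equality only when $a_{i\ell}\le a_{j\ell}$ or $a_{j\ell}=0$) correctly shows that the criterion of (b) with $O=\emptyset$, namely $\sum_{\ell\neq j}(\cdot)=(a^2-1)\sum_{\ell\neq j}a^2_{j\ell}$, forces $\an(j)=\emptyset$. But that is not the condition \eqref{eq:13}: there the sum equals $0$, so nonnegativity forces each \emph{summand} to vanish, which happens in the opposite regime $a^2_{i\ell}\ge a^2a^2_{j\ell}$ (or $a_{j\ell}=0$); testing at $i=\ell\in\an(j)$ gives $a^2_{\ell\ell}\ge a^2a^2_{j\ell}$, which does \emph{not} contradict Lemma~\ref{p2ineq}(ii) when $a$ is close to $1$. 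Concretely, take the two-node DAG $2\to1$ with $0<a_{12}<1$ and any fixed $1<a<1/a_{12}$: node $1$ is not a source node, yet the only summand in \eqref{eq:13} is $a^2_{22}\vee a^2a^2_{12}-a^2_{22}\vee a^2_{12}=1-1=0$, so \eqref{eq:13} holds. Thus \eqref{eq:13} for a single fixed $a$ does not imply that $j$ is a source node, and your argument never establishes ``\eqref{eq:13} $\Rightarrow$ $j$ is a source''; your closing ``conversely'' sentence proves the easy direction a second time. The repair must use the quantifier over all $a>1$ that ``arbitrary'' carries here: if $\ell_0\in\an(j)$, take $i=\ell_0$ and $a>a_{\ell_0\ell_0}/a_{j\ell_0}$; then the $\ell_0$-summand equals $a^2a^2_{j\ell_0}-a^2_{\ell_0\ell_0}>0$ while all other summands are nonnegative, contradicting \eqref{eq:13}. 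Note the contrast with part (b) and with Theorem~\ref{p2sourcenodes}, where each fixed $a>1$ suffices; this asymmetry is precisely what your reduction of (a) to (b) overlooks.
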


We provide further intuition by showing Theorem \ref{p2sourcenodes} at work for {Example 3} of \cite{KK} with $d=4$. 

\bexam\label{p2theo1ex1}
Let $\boldsymbol{X}$ satisfy the setting in Theorem \ref{p2sourcenodes} with max-linear coefficient matrix $A$ and corresponding {\DAG} given in Figure \ref{p2fig1}.
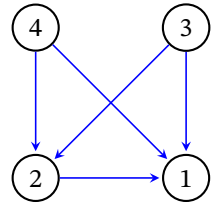
\begin{figure}[H]
	\centering
	\begin{tabular}{p{6cm}c}
		{$\displaystyle
			A={\renewcommand{\arraystretch}{1.2}
				\begin{bmatrix}
				a_{11}&a_{12} & a_{13}& a_{14} \tikzmark{lineone}\\
				0&a_{22}& a_{23}&a_{24}\tikzmark{linetwo}\\
				0&0&a_{33}&0\tikzmark{lineThree}\\
				0&0&0&a_{44}\tikzmark{lineFour}\\
				\end{bmatrix}}
			$}
		&$\vcenter{\hbox{\begin{tikzpicture}[
				> = stealth, 
				shorten > = 1pt, 
				auto,
				node distance = 2cm, 
				semithick 
				]
				\tikzstyle{every state}=[
				draw = black,
				thick,
				fill = white,
				minimum size = 4mm
				]
				\node[state] (4) {$4$};
				\node[state] (3) [right of=4] {$3$};
				\node[state] (2) [below of=4] {$2$};
				\node[state] (1) [below  of=3] {$1$};

				\path[->][blue] (4) edge node {} (2);
				\path[->][blue] (3) edge node {} (2);
				\path[->][blue] (3) edge node {} (1);
				\path[->][blue] (2) edge node {} (1);
				\path[->][blue] (4) edge node {} (1);
				\end{tikzpicture}}}$
	\end{tabular}
	\caption{Max-linear coefficient matrix $A$ and corresponding DAG.}\label{p2fig1}
	\end{figure}
	
	We first inspect whether node $2$ is a source node for the DAG in Figure~\ref{p2fig1}. 
    
    Set $a>1$ and $j=2$ and compute \eqref{eq:13} 
	\begin{align*}
	&\sum_{\ell\in V\setminus\{j\}} (a_{i\ell}^2\vee a^2  a_{2\ell}^2 - a_{i\ell}^2\vee  a_{2\ell}^2)\\
    &= a_{i1}^2\vee a^2  a_{21}^2 - a_{i1}^2\vee  a_{21}^2 + a_{i3}^2\vee a^2  a_{23}^2 - a_{i3}^2\vee  a_{23}^2 + a_{i4}^2\vee a^2  a_{24}^2 - a_{i4}^2\vee  a_{24}^2\\
    &
    \geq (a^2-1) a_{24}^2
	\end{align*}
    for $i=3$, since then $a_{i\ell}=0$ for $\ell\neq 3$, and $a_{21}=0$.
    Hence, $2$ {cannot} be a source node.
By the same argument, also $1$ is not a source node.
If $j=4$, then for $i \in\{1,2,3\}$ eq. \eqref{eq:13} implies that
\begin{align*}
	&\sum_{\ell\in V\setminus\{4\}} (a_{i\ell}^2\vee a^2  a_{2\ell}^2 - a_{i\ell}^2\vee  a_{4\ell}^2)\\
    &= a_{i1}^2\vee a^2  a_{41}^2 - a_{i1}^2\vee  a_{41}^2 + a_{i2}^2\vee a^2  a_{42}^2 - a_{i2}^2\vee  a_{42}^2 + a_{i3}^2\vee a^2  a_{43}^2 - a_{i3}^2\vee  a_{43}^2\\
    &=  0
\end{align*} 
since $a_{41}=a_{42}=a_{43}=0$. Hence, $4$ is a source node.
By the same argument, also $3$ is a source node.

We set $O=(3,4)$. We want to check whether $1\in \des(2)$.
We set $j=1$ and $i=2$
and compute \eqref{eq:14}:
\begin{align*}
a^2_ {22}\vee a^2 a^2_{12} - a^2_ {22}\vee  a^2_{12}
 < (a^2-1)    a^2_{12},
\end{align*}
since either $a^2_ {22}\ge  a^2 a^2_{12}$, implying that the right-hand side is equal to 0, or $a^2_ {22}< a^2 a^2_{12}$, giving that the right-hand side is the correct upper bound.

Since the inequality is strict, Corollary~\ref{cor:1}(b) implies that $1\in\des(2)$. If we take $i=1$ and $j=2$ and follow the same procedure, we find
\begin{align*}
a^2_ {11}\vee a^2 a^2_{21} - a^2_ {11}\vee  a^2_{21}
 = (a^2-1)    a^2_{21} = 0,
\end{align*}
since $a_{21}=0$; hence we
reach equality in (\ref{p2critpair}), indicating that $2\notin\des(1)$. 
This gives the causal order $O=(1, 2, 3, 4)$.
\halmos
\eexam

We now combine these results in an algorithm, similar to Algorithm~1 of \citet{K}, to identify a causal order. As above, we write $O$ as a vector to indicate the order but also apply set relations to it.
We use Theorem~\ref{p2sourcenodes}(a) to initialise Algorithm \ref{p2causordalg} (with $O=\emptyset$), and then apply Theorem~\ref{p2sourcenodes}(b) until all $d$ nodes have been ordered. 

Assume $n$ i.i.d. observations of $\bsx\in\R^d_+$ from a RMLM satisfying Assumptions A. 
We use the estimated scalings to check \eqref{p2critinit} and \eqref{p2critpair} of Theorem \ref{p2sourcenodes}, allowing for some estimation errors, to obtain a causal order.

We first outline the elements of Algorithm \ref{p2causordalg}:
\begin{itemize}
	\item[-]  the matrix $\Delta_{O} \in \mathbb{R}^{d\times d}$, with entries 
	\begin{align}\label{p2algupdt}
	(\Delta_{ O})_{ij} &={\wh\sigma}^2_{M_{i, a j, aO}}-{\wh\sigma}^2_{M_{i, j, O}}-(a^2-1) {\wh\sigma}^2_{M_{ j, O}},\quad  i,j\in V \setminus O, i\neq j,\\
    (\Delta_{ O})_{ij} &=\infty, \quad  i\in O\mbox{ or }j\in O \text{ or } i=j;\nonumber
	\end{align}
	\item[-]  
	 the operator colmin$_{O^c}({\Delta}_O)\in\R^{|O^c|}$ takes the minimum entry 
     of every column indexed in $O^c$ of the matrix $\Delta_{O}$; 
     \item[-] 
     the vector $\boldsymbol{\delta}_O=(\delta_{ O,1},\ldots, \delta_{ O,d} )$ gives the difference between the columnwise minimum $\textrm{colmin}_V (\Delta_O)$, applied to all $d$ columns of $\Delta_{O}$, and the maximum of colmin$_{O^c}({\Delta}_O)$;
    \item[-] $\eps_O:=\eps|\max(\mbox{colmin}_{O^c}({\Delta}_O))|$
\end{itemize}

\begin{algorithm}[h]
	\caption{Find a causal order $O$ of the RMLM $\boldsymbol{X}$}\label{p2causordalg}
	\textbf{Input}: $\boldsymbol X\in{\mathrm{RV}}^d_+(2)$ with standard margins,\,\, $a>1,\, \eps>0,\, O= \emptyset, \Delta_{O} = (0)_{d\times d},\, {\boldsymbol\delta}_{O}~=(0)_{1\times d}; $\\
	\textbf{Output}: The ordered set $O$
	\begin{algorithmic}[1]
		\Procedure{}{} 
		\State \textbf{while} $| O|< d$ \textbf{do}
		\State \hspace{5mm} \textbf{Compute}  $\Delta_{ O} $ \textbf{using }\eqref{p2algupdt} 
		%
		\State \hspace{5mm} \textbf{Set} ${\boldsymbol\delta}_{O} := {\textrm{colmin}_V (\Delta_{ O})} - \max\{\textrm{colmin}_{O^c}(\Delta_{O})\}$
		\State \hspace{5mm} \textbf{and} $\eps_{O}\coloneqq \eps| \max\{\textrm{colmin}_{O^c} (\Delta_{ O})\}|$
		\State \hspace{13.4mm} $\pi = \underset{\{\,p\,\in\, O^c\,:\, |{\boldsymbol\delta}_{O,p}|\, \leq\, \eps_{O} \}} \arg\, \textrm{sort}  \,\delta_{ O}$ 
		\State \hspace{13mm} \textbf{Update} $O$ \textbf{by adding} $O\leftarrow (\pi,   O)$
		\State\textbf{end while} 
		\State\textbf{return} $O$ 
		\EndProcedure
	\end{algorithmic}
\end{algorithm}

At each iteration of the \textbf{while} loop in Algorithm \ref{p2causordalg},  we update $\Delta_{O}$ and $\eps_{O}$ by accounting for the set $O$ of already ordered nodes.

We briefly illustrate the motivation behind Algorithm \ref{p2causordalg}, in particular, {we show that $\eps>0$ allows for ordering several nodes in one iteration step.}
To do so, we reconsider Example \ref{p2theo1ex1} and go through the first iteration when $O=\emptyset$.

\bexam[Continuation of Example~\ref{p2theo1ex1}]
Consider the RMLM $\boldsymbol X$ with DAG presented in Figure~\ref{p2fig1} and start with $O=\emptyset$.
 We first compute the $d\times d$ matrix $\Delta_{\emptyset}$ as in \eqref{p2algupdt} from the scalings, running through all $i,j\in V$, $i\neq j$. 

We know from the calculations in Example~\ref{p2theo1ex1} and Theorem~\ref{p2sourcenodes} that in 
Line 4 of the algorithm we have $\boldsymbol{\delta}_O={(m_1, m_2, m_3,m_4)=}(m_1, m_2, 0,0)$ for $m_1,m_2\neq 0$. 
If $\eps=0$ we may then only select one of the source nodes $3$ or $4$, but if we let $\eps>0$, we allow for a small  difference between $m_3$ and ${m}_4$ {to account for estimation errors}, whereby we may select nodes 3 and 4 as source nodes in line 6 of the algorithm.  

This prepares for replacing the theoretical scalings by estimated scalings and the causal order $O$ by an estimated order $\widehat O$. 
For finitely many observations of $\bsx$, the estimated vector $\widehat{\boldsymbol{\delta}}_{ \widehat O}=(\widehat m_1,\widehat{m}_2,\widehat{m}_3,\widehat{m}_4)$ is almost surely different from $\boldsymbol{\delta}_O$, and,  if $\eps>0$, then $\eps_{O}>0$. 
The error term preserves consistency in selecting the correct nodes: to see why, note that $\widehat{m}_3$, $\widehat{m}_4$ and $\eps_{\widehat O}$ converge to zero in probability, and $\widehat{m}_1$ and $\widehat{m}_2$ respectively converge to $m_1$ and $m_2$. This is a consequence of the empirical estimator \eqref{p2specemp} and the law of large numbers in \eqref{eq:LLN} for all the scalings used and the continuous mapping theorem.
The introduction of an $\eps>0$  therefore enables the  selection of more than one source node at a time. 
When $O\neq\emptyset$, a similar reasoning applies to the nodes with ancestors in $O$.
\halmos
\eexam

The next theorem establishes consistency of Algorithm \ref{p2causordalg} as an immediate consequence of the empirical estimator \eqref{p2specemp} and the law of large numbers in \eqref{eq:LLN} below for all scalings used and the continuous mapping theorem, which also ensures that $\wh\eps_O$ converges to $0$ in probability.

\begin{theorem}[Proposition~3 of~\citet{K}]\label{p2consistencyy}
	Let $\boldsymbol{X}$ be a RMLM satisfying Assumptions A.
	Let $\boldsymbol{X}_1,\ldots,\boldsymbol{X}_n$ be independent replicates of $\boldsymbol{X}$.
	Replace theoretical quantities, i.e., the scalings $\sigma$ in \eqref{p2critinit} and \eqref{p2critpair}, by consistent estimates $\wh\sigma$ as presented in Section~\ref{sec:stats}.
	Let $\widehat O=(i_1,\ldots, i_d)$ denote the estimated output of Algorithm \ref{p2causordalg}. 
	Then $\widehat O$ is a consistent estimator of a causal order of the components of $\boldsymbol{X}$.
\end{theorem}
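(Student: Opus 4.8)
The plan is to show that, with probability tending to one, every node added by Algorithm~\ref{p2causordalg} at each iteration has no ancestor outside the currently ordered set $O$, so that the invariant $\An(O)\cap O^c=\emptyset$ is preserved at every step and the returned sequence respects all ancestral relations. First I would record the exact dichotomy furnished by Theorem~\ref{p2sourcenodes}(b): whenever $O$ satisfies $\An(O)\cap O^c=\emptyset$ (call such $O$ \emph{admissible}), the $j$-th entry $\big[\mathrm{colmin}_V(\Delta_O)\big]_j$ equals $0$ precisely when $j\in O^c$ has no ancestor in $O^c$, and is strictly negative otherwise (the strict inequality being attained at any $i\in O^c\cap\an(j)$). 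Hence $\max\{\mathrm{colmin}_{O^c}(\Delta_O)\}=0$, and the theoretical vector $\boldsymbol\delta_O$ satisfies $\delta_{O,j}=0$ on the next generation and $\delta_{O,j}<0$ on the remaining nodes of $O^c$. Since $V$ is finite there are at most $2^d$ admissible sets, so the gap $\gamma:=\min_O\big(-\max\{\delta_{O,j}:j\in O^c,\ \delta_{O,j}<0\}\big)$, taken over admissible $O$ with at least one such remaining node, is strictly positive.

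Next I would bring in the stochastic inputs. By the empirical estimator~\eqref{p2specemp}, the law of large numbers~\eqref{eq:LLN} for the scalings involved and the continuous mapping theorem, every estimated scaling is consistent, so $\wh{\boldsymbol\delta}_O\stp\boldsymbol\delta_O$ and $\wh\eps_O=\eps\,\big|\max\{\mathrm{colmin}_{O^c}(\wh\Delta_O)\}\big|\stp\eps\cdot 0=0$ for each admissible $O$. Two consequences drive the argument. On the one hand, the node $p^\ast$ attaining the empirical maximum has $\wh\delta_{O,p^\ast}=0$ by construction, so $|\wh\delta_{O,p^\ast}|\le\wh\eps_O$ always holds and line~6 selects at least one node; the algorithm thus makes strict progress at every iteration. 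On the other hand, for any $j\in O^c$ possessing an ancestor in $O^c$ we have $|\wh\delta_{O,j}|\stp|\delta_{O,j}|\ge\gamma>0$ while $\wh\eps_O\stp0$, so $\P\big(|\wh\delta_{O,j}|>\wh\eps_O\big)\to1$ and such a $j$ is rejected with probability tending to one.

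I would then assemble these facts by induction on the at most $d$ iterations. On the event that at no iteration a node with an ancestor in $O^c$ is ever selected---an event of probability tending to one, by the previous paragraph together with a union bound over the finitely many admissible sets $O$ and columns $j$---every selected node lies in the current next generation, which by the final assertion of Theorem~\ref{p2sourcenodes}(b) consists of mutually non-ancestral nodes. Adding any non-empty subset of them, in the internal order produced by $\arg\,\mathrm{sort}$, therefore preserves $\An(O)\cap O^c=\emptyset$ and creates no violation of ancestral precedence; since at least one node is added per step, after finitely many steps all $d$ nodes are ordered. On this event $\wh O$ is a valid causal order, which is precisely the assertion that $\wh O$ consistently estimates a causal order of $\bsx$.

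The step I expect to require the most care is the inductive control of the \emph{random} ordered set, since at each stage the reference dichotomy and the gap $\gamma$ refer to the true graph relative to the set $O$ built so far, which is itself random. I would handle this by conditioning on the high-probability event that the partial order produced up to that stage is correct---so that $O$ is admissible and the fixed-$O$ analysis above applies---and by exploiting the finiteness of the number of iterations to absorb all vanishing error probabilities into a single union bound. A useful simplification is that one need not show that \emph{all} next-generation nodes are captured in a given step: deferring some to later iterations is harmless, since a next-generation node retains that status as $O$ grows and is eventually selected. This avoids the delicate comparison of the two sequences $\wh\delta_{O,j}$ and $\wh\eps_O$, which both vanish on the next generation.
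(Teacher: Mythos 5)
Your proposal is correct and follows essentially the same route as the paper, which treats the result as an immediate consequence of the empirical estimator \eqref{p2specemp}, the law of large numbers \eqref{eq:LLN} and the continuous mapping theorem (giving $\wh{\boldsymbol\delta}_O\stp\boldsymbol\delta_O$ and $\wh\eps_O\stp 0$), combined with the exact/strict dichotomy of Theorem~\ref{p2sourcenodes}(b), deferring details to Proposition~3 of \citet{K}. Your write-up merely makes explicit what the paper leaves implicit --- the uniform positive gap over the finitely many admissible sets, the guaranteed progress via the empirical argmax, and the union bound over the at most $d$ iterations --- so it is a fleshed-out version of the same argument rather than a different one.
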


\section{Computing $A$ from scalings}\label{sec:A}

We now assume that the DAG supporting the RMLM $\boldsymbol{X}=A \times_{\max}\boldsymbol{Z}$ is well-ordered and satisfies Assumptions~A.
Since $A$ is upper triangular and standardised it satisfies the properties given in Lemma~\ref{p2ineq}.
Section 4 of \cite{KK} provides a method to identify the max-linear coefficient matrix $A$ from certain scalings. 
We recall that we know the causal order of the nodes from Section~\ref{p2slearn}.
The notation for the max-projections is self-evident and similar to \eqref{Mscale}.

We first illustrate the identification of the max-linear coefficient matrix $A$ from scalings by the following example. 

\bexam\label{examp2}
	Let $\boldsymbol{X}$ be a RMLM  on a well-ordered DAG satisfying Assumptions A 
	such that
	\begin{align*}
	\boldsymbol{X}=A\times_{\max}\boldsymbol{Z}=
	\begin{bmatrix}
	a_{11} & a_{12}  & a_{13} \\
	0 &   a_{22}     &a_{23}\\
	0&   0     & a_{33}
	\end{bmatrix}\times_{\max}\boldsymbol{Z}.
	\end{align*}
    Recall from Lemma~\ref{p2ineq}(i) that every row must have norm 1 by standardisation.

	We {start with the} diagonal entries. 
	Obviously, $a^2_{33}=\sigma_{3}^2=1$.
	From Lemma~\ref{p2ineq}(ii) we know that $a_{ii}>a_{ki}$ for $k<i$.
	{Take} $M_{ij}$ for $1\le i,j\le 3$ and {let} $M_{123}$ be defined as on the left-hand side of \eqref{Mscale}.
	From Proposition~\ref{p2scalcoll} we {compute}
	\beao
	\barr{rcrcr}
	\sigma_{M_{123}}^2&=& a_{11}^2+a_{22}^2+a_{33}^2 &=& a_{11}^2+a_{22}^2+1,\\
	\sigma_{M_{23}}^2&=& a^2_{21}\vee a^2_{31} +a_{22}^2+a_{33}^2& =& 0 \, + a_{22}^2+\sigma_3^2,
	\earr
	\eeao
	 whereby we find that $a_{22}^2=\sigma_{M_{23}}^2-\sigma_{3}^2$ and $a_{11}^2=\sigma_{M_{123}}^2-\sigma_{M_{23}}^2$.
	
	{We next compute} the remaining entries in the first row of $A$, namely $a_{12}$ and $a_{13}$. \\
	For $a_{12}$ we consider $M_{13}$ and find from Lemma \ref{p2scalcoll}(i) that
	$\sigma_{M_{13}}^2=a_{11}^2+a_{12}^2+a_{33}^2=a_{11}^2+a_{12}^2+\sigma_{3}^2,$
	which yields 
	$$a_{12}^2=\sigma_{M_{13}}^2-\sigma_{3}^2-a_{11}^2=\sigma_{M_{13}}^2+\sigma_{M_{23}}^2-\sigma_{M_{123}}^2-\sigma_{3}^2.$$ 
	Finally, {one can easily compute} $a_{13},a_{23}$, since the rows of $A$ have norm 1.
\eexam

We generalise the above recursion for a $d$-dimensional RMLM
\begin{align}\label{eq:Atimes} 
\bsx = A \times_{\max}\boldsymbol{Z}
\quad\mbox{with}\quad
 A 
=\begin{bmatrix}
a_{11}&\hdots &a_{1{d}}\\
\vdots& \ddots & \vdots\\
0&\hdots &a_{{d}{d}}
\end{bmatrix},
\end{align}
which gives rise to Algorithm~\ref{recalg2} below; for a proof we refer to \cite{KK}.

\begin{proposition}[\cite{KK}, Proposition 2]\label{estalg2}
Let $\boldsymbol{X}\in\RV^d_+$ be a RMLM on a well-ordered DAG with representation \eqref{eq:Atimes} satisfying Assumptions A.
Then the following recursion yields the max-linear coefficient matrix $A$:
	\begin{align}
	a_{{d}{d}}^2  &=  \sigma_{d}^2=1 \, \mbox{ and } \, a_{ii}^2 \, =\, \sigma_{M_{i,\dots,{d}}}^2-\sigma_{M_{i+1,\dots,{d}}}^2 \quad	  i=1,\dots,{d}-1,  \label{recformula1}\\
	a_{ij}^2 &= \sigma_{M_{i,j+1,j+2,\dots,{d}}}^2-\sigma_{M_{j+1,j+2,\dots,{d}}}^2-\sum_{k=i}^{j-1}a_{ik}^2 \quad \quad\,
	i=1,\dots,{d}-2,  j =i+1,\dots,{d}-1.
	\label{recformula2}\\
	a_{i{d}}^2 &= \sigma_{i}^2-\sum_{k=i}^{{d}-1}a_{ik}^2 \, = \, 1-\sum_{k=i}^{{d}-1}a_{ik}^2 \quad\quad\quad\quad\quad  i=1,\dots,{d}-1. \label{recformula3}
	\end{align}
\end{proposition}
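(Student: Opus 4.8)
The plan is to read off each family of entries directly from the max-projection scalings in Proposition~\ref{p2scalcoll}(i), exploiting that under Assumptions~A the matrix $A$ is upper triangular and satisfies the diagonal dominance $a_{\ell\ell}>a_{k\ell}$ ($k\neq\ell$) of Lemma~\ref{p2ineq}(ii). The single recurring manoeuvre is a \emph{join collapse}: over any index range the triangular structure kills the terms with $k>\ell$, and among the survivors the diagonal entry is the maximiser, so a nested maximum reduces to a diagonal square.

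First I would establish the diagonal recursion \eqref{recformula1}. Applying Proposition~\ref{p2scalcoll}(i) with $I=\{i,i+1,\dots,d\}$ gives $\sigma_{M_{i,\dots,d}}^2=\sum_{\ell\in V}\bigvee_{k=i}^{d}a_{k\ell}^2$. For fixed $\ell$, upper-triangularity forces $a_{k\ell}=0$ whenever $k>\ell$, so the join vanishes for $\ell<i$ and otherwise reduces to $\bigvee_{k=i}^{\ell}a_{k\ell}^2$, which by Lemma~\ref{p2ineq}(ii) equals $a_{\ell\ell}^2$. Hence $\sigma_{M_{i,\dots,d}}^2=\sum_{\ell=i}^{d}a_{\ell\ell}^2$, and subtracting the analogous identity for $M_{i+1,\dots,d}$ telescopes to $a_{ii}^2$; the base case $a_{dd}^2=\sigma_d^2=1$ is immediate.

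Next I would treat the off-diagonal entries \eqref{recformula2}, applying Proposition~\ref{p2scalcoll}(i) to $I=\{i\}\cup\{j+1,\dots,d\}$ to obtain $\sigma_{M_{i,j+1,\dots,d}}^2=\sum_{\ell\in V}\big(a_{i\ell}^2\vee\bigvee_{k=j+1}^{d}a_{k\ell}^2\big)$. Splitting on $\ell$: for $\ell\le j$ the tail join is zero, leaving the summand $a_{i\ell}^2$ (nonzero only for $i\le\ell\le j$); for $\ell\ge j+1$ the tail join collapses to $a_{\ell\ell}^2$ exactly as above, and this dominates $a_{i\ell}^2$ by Lemma~\ref{p2ineq}(ii). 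This yields $\sigma_{M_{i,j+1,\dots,d}}^2=\sum_{\ell=i}^{j}a_{i\ell}^2+\sum_{\ell=j+1}^{d}a_{\ell\ell}^2$. Subtracting the diagonal identity $\sigma_{M_{j+1,\dots,d}}^2=\sum_{\ell=j+1}^{d}a_{\ell\ell}^2$ leaves $\sum_{\ell=i}^{j}a_{i\ell}^2$, and isolating the top term gives \eqref{recformula2}, whose subtracted sum $\sum_{k=i}^{j-1}a_{ik}^2$ consists of entries already computed earlier in the same row. Finally, \eqref{recformula3} is just the unit-norm constraint $\sum_{k=i}^{d}a_{ik}^2=\sigma_i^2=1$ of Lemma~\ref{p2ineq}(i), solved for $a_{id}^2$.

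The computation is short, so there is no serious obstacle; the one point demanding care is the join collapse. In each index regime one must verify separately that triangularity removes the $k>\ell$ terms, that Lemma~\ref{p2ineq}(ii) makes the diagonal the maximiser over the retained terms, and that in the overlap range $i\le\ell\le j$ the contribution is exactly $a_{i\ell}^2$. Pinning down these ranges is what makes the telescoping clean and also what certifies the recursion is well-founded: \eqref{recformula1} supplies the diagonal, \eqref{recformula2} then fills each row from left to right, and \eqref{recformula3} closes it.
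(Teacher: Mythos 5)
Your proof is correct and follows essentially the same route as the paper: the paper defers the formal proof to \cite{KK}, but its Example~\ref{examp2} carries out precisely your argument for $d=3$, namely computing the max-projection scalings via Proposition~\ref{p2scalcoll}, collapsing the joins using upper-triangularity and the diagonal dominance of Lemma~\ref{p2ineq}(ii), telescoping for the diagonal and off-diagonal entries, and closing each row with the unit-norm constraint of Lemma~\ref{p2ineq}(i). Your write-up is simply the clean $d$-dimensional generalisation of that example, which is exactly what the cited proof in \cite{KK} does.
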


\begin{algorithm}[ht]
	\caption{Computation of the max-linear coefficient matrix A }\label{recalg2}
	\textbf{Input}:\quad $A = (0)_{{d}\times{d}}$\\
	\textbf{Output}: The max-linear coefficient matrix  $A\in \R_+^{d\times d}$
	\begin{algorithmic}[1]
		\Procedure{}{}
		\State \textbf{for} $i = 1,\dots,{d}-2$ \textbf{do}
		\State \hspace{5mm}\textbf{Set} $a_{ii}^2=\sigma_{M_{i,i+1,\dots,{d}}}^2-\sigma_{M_{i+1,\dots,{d}}}^2$ \textbf{using} \eqref{recformula1} 
		\State \hspace{13.4mm}\textbf{for} $j=i+1,\dots,{d}-1$ \textbf{do}
		\State \hspace{18.4mm}\textbf{Set} $a_{ij}^2=\sigma_{M_{i,j+1,\dots,{d}}}^2-\sigma_{M_{j+1,\dots,{d}}}^2-\sum_{k=i}^{j-1}a_{ik}^2$ \textbf{using} \eqref{recformula2} 
		\State\hspace{13.4mm}\textbf{end for}
		\State\hspace{5mm}\textbf{Set} $a_{i{d}}^2=\sigma_{i}^2-\sum_{k=i}^{{d}-1}a_{ik}^2$ \textbf{using} \eqref{recformula3} 
		\State \textbf{end for}
		\State\hspace{0mm}\textbf{Set}
		$a_{{d}-1,{d-1}}^2=\sigma_{M_{d-1,d}}^2-\sigma_{M_d}^2$;
		$a_{{d}-1,{d}}^2=\sigma_{{{d}-1}}^2-a_{{d}-1,{d}-1}^2$; $a_{{d}{d}}^2=\sigma_{{d}}^2.$
		\State \textbf{return} A
		\EndProcedure
	\end{algorithmic}
\end{algorithm}

In Proposition~\ref{estalg2} we have shown that one can compute the diagonal entries of $A$ from the squared scalings $\sigma_{M_{1,2,\dots,{d}}}^2,\sigma_{M_{2,3,\ldots,{d}}}^2,\dots,$ $ \sigma_{M_{{d}-1,{d}}}^2,\sigma_{d}^2$ by a recursion algorithm.
Furthermore, we have identified the non-diagonal entries of the $i$-th row of $A$ from
$$(\sigma_{M_{i,i+1,\dots,{d}}}^2,\sigma_{M_{i,i+2,\dots,{d}}}^2,\dots, \sigma_{M_{i,{d}}}^2 ,\sigma_{i}^2),\quad i=1,\dots,{d}.$$
We summarise all these quantities into one column vector ${S}_{M}\in \mathbb{R}^{{d}({d}+1)/2}_+$, i.e., 
    \begin{align}\label{sm}
	{S}_{M}\coloneqq( {\sigma}_{M_{1,2,\dots,{d}}}^2, {\sigma}_{M_{1,3,\dots,{d}}}^2,\dots,  {\sigma}_{M_{1,{d}}}^2 , {\sigma}_{1}^2,  {\sigma}_{M_{2,3,\dots,{d}}}^2, {\sigma}_{M_{2,4,\dots,{d}}}^2,\dots,  {\sigma}_{M_{2,{d}}}^2, {\sigma}_{2}^2,\dots , {\sigma}_{M_{{d}-1,{d}}}^2, {\sigma}_{{d}-1}^2,  {\sigma}_{d}^2).
\end{align}
Consider the row-wise vectorised version of the squared entries of the upper triangular matrix $A$, where we use $A^2$ for the {matrix with squared entries of $A$} and its vectorised version 
\begin{align}\label{vecA}
{A^2}\coloneqq(a_{11}^2,\dots,a_{1{d}}^2,a_{22}^2,\dots,a_{2{d}}^2,\dots..,a_{{d}-1,{d}-1}^2, a_{{d}-1,{d}}^2,a_{{d}{d}}^2).
\end{align}
Note that both vectors ${A^2}$ and ${S}_{M}$ show a similar structure, built from row vectors with ${d},{d}-1,\ldots,1$ components, respectively; so both have ${d}({d}+1)/2$ components.
By means of Proposition~\ref{estalg2} we show that ${A^2}$ can be written as a linear transformation of $S_M$.

\begin{theorem}[\cite{KK}, Theorem~1]\label{consT}
	Let $S_M$ and ${A^2}$ be as in (\ref{sm}) and (\ref{vecA}), respectively. 	
    Then
\beam \label{Alinear} 
{A^2} &=& T \, S_M,
\eeam
where $T\coloneqq (t_{uv})_{k\times k}\in \mathbb{R}^{k\times k}$ for $k={d}({d}+1)/2$ has non-zero entries in the rows corresponding to the upper triangular components $a^2_{ij}$ in the vector (\ref{vecA}) given by
\begin{enumerate}
\item[]
$a^2_{ii}:\quad t_{\ell_{ii}, \ell_{ii}}=1$, $t_{\ell_{ii}, \ell_{i+1,i+1}}=-1 $ for  $i=1,\dots,{d}-1$;
\item[]
$a^2_{{d}{d}}:\quad t_{\ell_{ii}, \ell_{ii}}=1$ for $i={d}$; 
\item[]
$a^2_{ij}:\quad t_{\ell_{ij},\ell_{ij}}=1,  t_{\ell_{ij},\ell_{j+1,j+1}}=-1, t_{\ell_{ij},\ell_{i,j-1}}=-1, t_{\ell_{ij},\ell_{jj}}=1$ for $i<j\leq {d}-1$; 
\item[]
$a^2_{i{d}}:\quad t_{\ell_{i{d}},\ell_{i{d}}}=1, t_{\ell_{i{d}},\ell_{i,{d}-1}}=-1,  t_{\ell_{i{d}},\ell_{{d}{d}}}=1$ for $i=1,\dots,{d}-1$,
\end{enumerate}
where {$\ell_{ij}=(j-{d})+\sum_{k=0}^{i-1}({d}-k)$ for $i=1,...,d$ and $j\geq i$}. All remaining entries of $T$ are equal to zero.
\end{theorem}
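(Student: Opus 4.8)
\section*{Proof proposal}

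The plan is to exploit the fact that the recursion of Proposition~\ref{estalg2} is already \emph{linear} in the scalings, so the mere existence of a matrix $T$ with $A^2 = T\,S_M$ is immediate; the real work is to unroll that recursion into a closed form carrying only coefficients in $\{+1,-1\}$ and then to match those coefficients against the explicit coordinate positions $\ell_{ij}$ of $S_M$. First I would introduce the partial row sums $P_{ij}\coloneqq\sum_{k=i}^{j}a_{ik}^2$ for $i\le j\le d-1$ and observe that the recursion \eqref{recformula2}, together with its diagonal instance \eqref{recformula1} (the case $j=i$, where the inner sum is empty), is nothing but the telescoping identity
\[
P_{ij} = \sigma_{M_{i,j+1,\dots,d}}^2 - \sigma_{M_{j+1,\dots,d}}^2,\qquad i\le j\le d-1.
\]
Indeed, rearranging \eqref{recformula2} moves $\sum_{k=i}^{j-1}a_{ik}^2$ to the left and reassembles the full partial sum $P_{ij}$, so no induction is needed: each line of the recursion already isolates a partial sum.

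Next I would recover the individual entries by differencing consecutive partial sums. For $i<j\le d-1$,
\[
a_{ij}^2 = P_{ij}-P_{i,j-1}
= \sigma_{M_{i,j+1,\dots,d}}^2 - \sigma_{M_{j+1,\dots,d}}^2 - \sigma_{M_{i,j,\dots,d}}^2 + \sigma_{M_{j,\dots,d}}^2 ,
\]
a signed sum of exactly four scalings with coefficients $+1,-1,-1,+1$. The remaining entries are boundary cases of the same mechanism: $a_{ii}^2$ is the two-term formula \eqref{recformula1} directly, $a_{dd}^2=\sigma_d^2$ is a single term, and for the last column I would combine \eqref{recformula3} with $P_{i,d-1}=\sigma_{M_{i,d}}^2-\sigma_d^2$ to obtain the three-term expression $a_{id}^2=\sigma_i^2-\sigma_{M_{i,d}}^2+\sigma_d^2$.

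Finally I would translate every scaling appearing above into its coordinate of $S_M$ by reading off the block structure of \eqref{sm}: the $i$-th block lists $\sigma_{M_{i,i+1,\dots,d}}^2,\dots,\sigma_{M_{i,d}}^2,\sigma_i^2$, so coordinate $\ell_{ij}$ holds $\sigma_{M_{i,j+1,\dots,d}}^2$ when $j<d$ and $\sigma_i^2$ when $j=d$. Reading off the four-term formula, $\sigma_{M_{i,j+1,\dots,d}}^2$ sits at $\ell_{ij}$, $\sigma_{M_{i,j,\dots,d}}^2$ at $\ell_{i,j-1}$, $\sigma_{M_{j+1,\dots,d}}^2$ at the head $\ell_{j+1,j+1}$ of block $j+1$, and $\sigma_{M_{j,\dots,d}}^2$ at $\ell_{jj}$; this reproduces verbatim the four nonzero entries claimed for row $\ell_{ij}$ of $T$, and the boundary computations likewise reproduce the diagonal, last-column and $a_{dd}^2$ rows. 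The one genuinely laborious step, and the main obstacle, is the index bookkeeping: one must verify that the stated formula $\ell_{ij}=(j-d)+\sum_{k=0}^{i-1}(d-k)$ assigns the head of block $j+1$ to $\ell_{j+1,j+1}$ and that the cumulative block offsets align with the row-wise vectorisation of $A^2$. This reduces to the cumulative-sum check $\sum_{m=1}^{i-1}(d-m+1)+1=\ell_{ii}$, which is routine arithmetic but must be carried out carefully to confirm that the column indices of $T$ land exactly where claimed.
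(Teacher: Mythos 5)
Your proposal is correct and follows essentially the same route as the paper: the paper likewise unrolls the recursion of Proposition~\ref{estalg2} into the signed closed-form expressions \eqref{Ai}--\eqref{ASp} (your four-, three- and two-term formulas obtained by differencing partial sums) and then constructs $T$ so that these relations hold coordinatewise in $S_M$ via the index map $\ell_{ij}$. Your explicit introduction of the partial sums $P_{ij}$ and the block-offset check $\ell_{ii}=\sum_{m=1}^{i-1}(d-m+1)+1$ just makes the paper's implicit bookkeeping visible.
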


The proof uses the fact that the vector $A^2$ in~\eqref{vecA} can be represented by simple linear combinations of scalings as follows:
\beam
	 a_{ii}^2 & =& \sigma_{M_{i,\dots,{d}}}^2-\sigma_{M_{i+1,\dots,{d}}}^2\quad\mbox{for }  i=1,\dots,{d}-1\quad \mbox{ and } \quad a_{{d}{d}}^2  =  \sigma_{{d}}^2=1\label{Ai}\\
	a_{ij}^2 &=& (\sigma_{M_{i,j+1,j+2,\dots,{d}}}^2-\sigma_{M_{j+1,j+2,\dots,{d}}}^2)-(\sigma_{M_{i,j,\dots,{d}}}^2-\sigma_{M_{j,\dots,{d}}}^2)\label{AS1}\\
	&& \mbox{for }  i=1,\dots,d-2 \mbox{ and } j=i+1,\dots,{d}-1\nonumber\\
	a_{i{d}}^2&=&\sigma_{i}^2-(\sigma_{M_{i,{d}}}^2-\sigma_{{d}}^2)\quad\mbox{for }   i<{d}\quad \mbox{ and } \quad a_{{d}{d}}^2=\sigma_{{d}}^2 \label{ASp}
	\eeam
We construct the matrix $T$ so that the relations~\eqref{Ai}--\eqref{ASp} hold when $T$ is applied to~$S_M$.

\bexam
We illustrate the linear transformation (\ref{Alinear}) for ${d}=4$, which clarifies the structure also for higher dimensions. For a RMLM with 4 nodes, by \eqref{Ai}, (\ref{AS1}) and (\ref{ASp}) the identity ${A}^2=T S_M$ becomes

$$
\begin{bmatrix}
a_{11}^2\\
a_{12}^2\\
a_{13}^2\\
a_{14}^2\\
\hline
a_{22}^2\\
a_{23}^2\\
a_{24}^2\\
\hline
a_{33}^2\\
a_{34}^2\\
\hline
a_{44}^2\\
\end{bmatrix}
\, = \,
\left[
\begin{array}{rrrr|rrr|rr|r}
1  & 0 & 0 & 0 \, & -1 & 0 & 0 \, & 0 & 0 \, & 0\\
-1 & 1 & 0 & 0 \, & 1 & 0 & 0 \, & -1 & 0 \, & 0\\
0  & -1 & 1 & 0 \, & 0 & 0 & 0 \, & 1 & 0 \, & -1\\
0  & 0 & -1 & 1 \, & 0 & 0 & 0 \, & 0 & 0 \, & 1\\
\hline
0 & 0 & 0 & 0 \, & 1 & 0 & 0 \, & -1 & 0 \, & 0\\
0 & 0 & 0 & 0 \, & -1 & 1 & 0 \, & 1 & 0 \, & -1\\
0  & 0 & 0 & 0 \, & 0 & -1 & 1 \, & 0 & 0 \, & 1\\
\hline
0 & 0 & 0 & 0 \, & 0 & 0 & 0 \, & 1 & 0 \, & -1\\
0  & 0 & 0 & 0 \, & 0 & 0 & 0 \, & -1 & 1 \, & 1\\
\hline
0  & 0 & 0 & 0 \, & 0 & 0 & 0 \, & 0 & 0 \, & 1\\
\end{array}
\right]
\, \times \, 
\begin{bmatrix}
\sigma^2_{M_{1,2,3,4}}\\
\sigma^2_{M_{1,3,4}}\\
\sigma^2_{M_{1,4}}\\
\sigma^2_{1}\\
\hline
\sigma^2_{M_{2,3,4}}\\
\sigma^2_{M_{2,4}}\\
\sigma^2_{2}\\
\hline
\sigma^2_{M_{3,4}}\\
\sigma^2_{3}\\
\hline
\sigma^2_{4}\\
\end{bmatrix}.
$$
\eexam

The next theorem establishes consistency and asymptotic normality of the estimated vectors $\wh S_M$ and $\wh A^2$ based on empirical estimators given in \eqref{p2specemp}. 

\begin{theorem}[\cite{KK}, Theorem~5, Theorem~6]\label{th:clt}
Let $\bsx\in\RV_+^d$ be a RMLM satisfying Assumptions~A. Let $\bsx_1,\dots,\bsx_n$ be independent replicates of $\bsx$.
Assume that condition (31) of \cite{KK} holds, which is satisfied provided that the dependence  between the angle and the radius of $\bsx$ decays sufficiently fast to the independent regular variation limit (cf. Theorem~6.1(5),(6) of \cite{ResnickHeavy}).
Then for $k=o(n)$ and $k\to\infty$ and $n\to\infty$,
$$\sqrt{k}(\wh S_M-S_M)\std \mathcal{N}(0,W_M)$$
with explicit covariance matrix $W_M$. 
Under certain conditions on the entries of the matrix $A^2$ ensuring a non-degenerate normal limit, Theorem~\ref{consT} implies a CLT for the squared max-linear coefficient matrix: $\wh A^2=T\wh S_M$ is asymptotically normal with mean $A^2=T S_M$ and covariance matrix $T W_M T^\top$.
\end{theorem}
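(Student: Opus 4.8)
The plan is to reduce the statement to a single functional central limit theorem for the empirical angular measure, and then to propagate it through the deterministic linear map $T$ of Theorem~\ref{consT}. First I would write every coordinate of $S_M$ as an angular moment: by \eqref{p2empdist} each squared scaling appearing in \eqref{sm} has the form $\sigma^2_{M_I}=\mathbb{E}_{H_{\bsx}}[f_I(\boldsymbol{\omega})]$ with $f_I(\boldsymbol{\omega})=\bigvee_{i\in I}\omega_i^2$ (and $f_{\{i\}}(\boldsymbol{\omega})=\omega_i^2$ for the single-node scalings $\sigma_i^2$). Each $f_I$ is a maximum of finitely many continuous functions, hence continuous, bounded by $1$ on $\Theta_+^{d-1}$, and trivially compactly supported, so \eqref{p2empdist} and the representation \eqref{eq:e} apply to it. Its empirical counterpart $\wh\sigma^2_{M_I}$ is the average of $f_I(\boldsymbol{\omega}_t)$ over the $k$ observations with the largest radii $R_t=\norm{\bsx_t}$, i.e. the integral of $f_I$ against the empirical angular measure of the exceedances (this is \eqref{p2specemp}). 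Thus $\wh S_M$ is a vector of integrals of one fixed finite family $\{f_I\}$ against a single empirical angular measure.

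Second, I would establish the joint CLT for $\wh S_M$ and identify $W_M$. I would introduce the tail empirical process of the normalised exceedances and invoke a functional CLT for the empirical angular measure under multivariate regular variation, handling the random threshold $R_{(n-k)}$ by a Vervaat/functional-delta-method argument; condition~(31) of \cite{KK}, equivalently the sufficiently fast decay of the angle--radius dependence to its independence limit in Theorem~6.1(5),(6) of \cite{ResnickHeavy}, together with $k=o(n)$ and $k\to\infty$, guarantees that $\sqrt{k}$ times the deterministic bias between the finite-level conditional expectation $\mathbb{E}[f_I(\boldsymbol{\omega})\mid R>u]$ and its limit $\sigma^2_{M_I}$ vanishes at the threshold level determined by $k$. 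Since each $f_I$ is bounded and continuous, any linear combination $\sum_I\lambda_I f_I$ is again bounded and continuous, so the one-dimensional tail empirical CLT applies to it and the Cram\'er--Wold device yields $\sqrt{k}(\wh S_M-S_M)\std\mathcal{N}(0,W_M)$. Because the number of exceedances is held equal to $k$, the limiting angular empirical process is of Brownian-bridge type, so the $(I,J)$ entry of $W_M$ is the centred covariance $\mathbb{E}_{H_{\bsx}}[f_If_J]-\mathbb{E}_{H_{\bsx}}[f_I]\mathbb{E}_{H_{\bsx}}[f_J]$ (up to the normalisation of $H_{\bsx}$); each moment is available in closed form from \eqref{eq;Hx}--\eqref{eq:e}, since $\mathbb{E}_{H_{\bsx}}[g]=\sum_{\ell\in V}\norm{\bsa_\ell}^2 g(\bsa_\ell/\norm{\bsa_\ell})$ for any bounded continuous $g$, applied to $g=f_If_J$, $g=f_I$, and $g=f_J$. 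This gives the announced explicit $W_M$ in terms of the entries of $A$.

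The statement for $\wh A^2$ then follows immediately. By Theorem~\ref{consT} one has $A^2=TS_M$ with a deterministic matrix $T$, and the estimator is defined by the same relation $\wh A^2=T\wh S_M$; applying the linear map $T$ to the Gaussian limit of $\wh S_M$ (continuous mapping, equivalently the linear delta method) gives $\sqrt{k}(\wh A^2-A^2)=T\sqrt{k}(\wh S_M-S_M)\std\mathcal{N}(0,TW_MT^\top)$. The extra conditions on the entries of $A^2$ are needed precisely to exclude degeneracy: for a structural zero $a_{ij}^2=0$ (an absent or non-max-weighted edge) the corresponding coordinate of $TS_M$ is an exactly attained boundary value of a nonnegative quantity, so its estimator sits at the boundary and has a constrained, non-normal limit; requiring the relevant entries of $A^2$ to be bounded away from $0$ ensures $TW_MT^\top$ delivers a genuine, non-degenerate normal limit.

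The step I expect to be the main obstacle is the bias control inside the functional CLT: verifying that, under condition~(31) and the growth restriction $k=o(n)$, $k\to\infty$, the $\sqrt{k}$-scaled difference between the finite-$n$ conditional angular expectation and its regular-variation limit $\sigma^2_{M_I}$ tends to zero, while the randomness of the threshold $R_{(n-k)}$ is handled rigorously. This is the usual variance--bias tradeoff of tail estimation, and it is where the second-order assumption does the real work; once the functional CLT for the empirical angular measure is in place, the algebra reducing $\wh A^2$ to a linear image of $\wh S_M$ is routine.
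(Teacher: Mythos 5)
The paper itself gives no proof of this theorem: it is quoted verbatim from \cite{KK} (Theorems~5 and~6), with the second assertion following from Theorem~\ref{consT} exactly as you say. Your proposal reconstructs essentially the same argument as that cited proof --- representing each entry of $S_M$ as an angular moment of a bounded continuous max-projection $f_I$, proving joint asymptotic normality of $\wh S_M$ via the tail empirical angular measure with the random threshold handled by a Vervaat-type argument, the bias killed by the second-order condition (31) under $k=o(n)$, Cram\'er--Wold for joint convergence, and then pushing the Gaussian limit through the deterministic linear map $T$ --- so it is correct and follows the same route.
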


\section{Statistics program}\label{sec:stats}

In the previous sections we started with a RMLM $\bsx$ satisfying Assumptions A: \\
(i) $\bsx=A \times_{\max}\boldsymbol{Z}$ with standardised matrix $A$ and innovation vector $\boldsymbol{Z}\in\RV_+^d(2)$.
We have identified \\
(ii) a causal order of the nodes 
 as well as  \\
(iii) the max-linear upper triangular matrix $A$. 
 
 We now assume that we observe independent copies $\bsx_1,\dots,\bsx_n$ of $\bsx$  and aim at estimating a causal order and the matrix A in (ii)-(iii) above.
Consider the angular representations of the data
$(R_\ell,\boldsymbol{\omega}_\ell)=(\norm{\boldsymbol{X}}_\ell,\bsx_\ell/ \norm{\boldsymbol{X}_\ell})$  for $\ell \in \{1,\dots,n\}$.
 For (ii) and (iii) we will need the empirical counterparts of \eqref{eq;Hx} and \eqref{eq:e}.
The empirical version of the normalised angular measure $\ov H(\cdot)=H(\cdot)/H(\Theta_+^{d-1})$, a probability measure on the sphere $\Theta_+^{d-1}$, is based on an appropriately large upper order statistics  $R^{(k)}$ for $k<n$, and is given by
	$$\wh{\ov{H}}_{\bsx, n/k} (\cdot) = \frac{1}{k}\sum_{\ell=1}^{n} \mathds{1}{\{R_\ell\geq R^{(k)}, \boldsymbol{\omega}_\ell\in\cdot \}}.
	$$
	This yields for $\E_{\ov{H}_{\bsx}} [f(\boldsymbol{\omega})]$ the  empirical estimator
	\begin{align}\label{p2specemp}
	\wh{\mathbb{E}}_{\ov{H}_{\boldsymbol{X}}}[f({\boldsymbol{\omega}})]=\frac{1}{k}\sum_{\ell=1}^{n}f({\boldsymbol{\omega}}_\ell)\mathds{1}{\{R_\ell \geq R^{(k)} \}}.
	\end{align}
	We employ the empirical estimator in \eqref{p2specemp} for functions $f$ corresponding to the theoretical quantities from  Definition~\ref{scaledef}, Proposition~\ref{p2scalcoll}, Theorem~\ref{p2sourcenodes} and Proposition~\ref{estalg2}.

\section{Financial application}\label{s7}

We consider a financial dataset of 30 industry portfolios of daily averaged returns from the Kenneth-French data library \footnote{The dataset is available at \url{
https://mba.tuck.dartmouth.edu/pages/faculty/ken.french/data_library.html}},  and study the causal mechanisms underlying the extremal dependence structure of negative returns. 
Each of the $d=30$ portfolios consists of indices from a particular economic sector, the full list of portfolios and some explanations are given in Appendix~\ref{A:portfolio}.
In the context of extreme value statistics, this dataset has been studied in \citet{cooley} with focus on a principal component analysis, in \citet{JanWan}, who perform clustering to find prototypes of extremal dependence, and in \citet{KK}, who model the causal extremal dependence between some of the portfolios. 


The dataset covers daily returns for the years 1950--2015, which contains several nonstationary episodes associated to {extreme} events, for instance, the dot-com bubble in 2000 and the financial sub-prime crisis in 2007.
Similar to~\citet{KK}, we select the time window from 01.06.1989 to 15.06.1998 containing $n=2285$ observations which exhibit {approximate} marginal stationarity. 
As we are interested only in negative returns, we work with  $\boldsymbol X=\max(-\boldsymbol X^*,\boldsymbol 0)$, where $\boldsymbol X^*$ is the vector of the original data and the maximum is taken componentwise.

Note that representation (i) of the statistics program requires that the margins of $\bsx$ are regularly varying with index $\alpha=2$ with unique scalings $\sigma_i=1$.
Hence, given independent observations $\bsx_1,\dots,\bsx_n$, we transform the marginal data to standard Fr\'echet margins  with $\alpha=2$ (see Example \ref{Fr}) using the empirical integral transform \cite[p. 338]{beirlant}:
\begin{align}\label{Ftransform}
X_{\ell i} = \Big\{ -\ln \Big(\frac1{n+1}\sum_{j=1}^n \bone{\{X_{ji}\le X_{\ell i}\}} \Big) \Big\}^{-1/2}, \quad \ell \in \{1,\dots,n\},\quad i\in V.
\end{align}

We consider the angular representations of the data
$(R_\ell,\boldsymbol{\omega}_\ell)=(\norm{\boldsymbol{X}}_\ell,\bsx_\ell/ \norm{\boldsymbol{X}_\ell})$  for $\ell \in \{1,\dots,n\}$
and compute their estimated squared scalings via
\begin{align}\label{example_scaling}
\wh\sigma_i^2 = \frac{d}{k} \sum_{\ell=1}^n \omega_{\ell i}^2 \bone\{R_{\ell}\ge R^{(k)}\}, \quad i\in V,
\end{align}
where $R^{(k)}$ for $k<n$ is an appropriately large upper order statistics and the factor $d$ comes from the fact that $H_{\bsx}(\Theta_+^{d-1})=d$ \cite[Supplement S.3.1]{KDK}.
As $k$ of the radii $R_1,\ldots,R_n$ are larger or equal to $R^{(k)}$, the law of large numbers gives 
\begin{align}\label{eq:LLN}
\wh\sigma_i^2\stp d\int_{\Theta_+^{d-1}}\omega^2 d\ov H_{\bsx}(\boldsymbol\omega)=\int_{\Theta_+^{d-1}}\omega^2 dH_{\bsx}(\boldsymbol\omega)=\sigma_i^2,\quad k\to\infty.
\end{align}
The scalings for Algorithms~\ref{p2causordalg} and~\ref{recalg2} are estimated similarly to~\eqref{example_scaling}, but are based on the angular measure of only those components of the vector $\boldsymbol X$ that are involved in $f$; for instance, to compute $\wh\sigma_{M_{ij}}^2$ we use the angular representation and the empirical angular measure of the vector $(X_i, X_j)$ instead of $\boldsymbol X$. 
For details on the estimation of the scalings for Algorithms~\ref{p2causordalg} and~\ref{recalg2}, we  refer to~\cite[Appendix~C.3]{K} and~\cite[Section~7.1.2]{KK}, respectively.

\subsection{Structure learning and minimum max-linear DAG}

We initially estimate a causal order of the 30 portfolios based on the extremal negative returns. Similar to~\citet{K}, we apply Algorithm~\ref{p2causordalg} with parameter values set to $a=1.3$, $\eps=0.1$, where the scalings are estimated from \eqref{p2specemp} based on the largest $k$ radii, which are exceedances of $R^{(k)}$.
We work with $k=250$ exceedances, corresponding to a fraction of observations close to what is used in \cite{K} for river discharges from the upper Danube in Bavaria and Rhine basin in Switzerland.

 A causal order of the nodes is given as outcome of Algorithm~\ref{p2causordalg} with $a=1.3$ and $\eps=0.1$ by \\[2mm]
  $O =\, $\{\textcolor{black} {Whlsl,}
 \textcolor{gray} {Hshld,} 
 \textcolor{black} {Food,} 
 \textcolor{gray} {Txtls,}
 \textcolor{black} {Smoke,}
  \textcolor{gray} {BusEq,}
  \textcolor{black} {Hlth, Carry, Beer, Servs, Util},
  \textcolor{gray} {Cnstr, Rtail, Clths, Telcm, Paper, Chems, Elcq, Meals, Other, Games, Fin, Fabr, Steel,}   
 \textcolor{black} {Trans, Books, Autos, Oil,}
  \textcolor{gray} {Mines, Coal} 
  \}$ $.\\[2mm]
Alternating black and gray indicate the different iteration steps in Algorithm~\ref{p2causordalg}, for instance, the source nodes are presented at the end in gray.
The abbreviations are explained in Appendix~\ref{A:portfolio}.

 Theorem~\ref{p2consistencyy} ensures consistency of the estimated order and we estimate $A$ such that the order resulting from Algorithm~\ref{p2causordalg} is respected;
i.e., there can only be an edge from $j$ to $i$ if node $i$ is found in a step subsequent to $j$; moreover, by Theorem~\ref{p2sourcenodes}(b) there can be no causal relations between two nodes $i$ and $j$ returned in the same step of Algorithm~\ref{p2causordalg} (cf. Example~\ref{p2theo1ex1}).
 This corresponds to setting certain entries of $A^2$ to zero, we denote the resulting squared max-linear coefficient matrix by $A^2_0$, and estimate its non-zero entries only. 
 To this end, we apply Algorithm~\ref{recalg2} with estimated scalings to obtain the estimated squared max-linear coefficient matrix $\wh A^2_0$. 

Similarly to~\citet[Section~7.2]{KK}, we work with $\wh A_{0+}=\max(\wh A_0^2,0)^{1/2}$, and obtain the estimated standardised matrix $\wh A$ with entries 
$\wh a_{ij}=\wh a_{0+, ij}/(\sum_{j\in V} \wh a_{0+,ij}^2)^{1/2}$ for all $i,j\in V$.

Our goal is to estimate the minimum max-linear DAG $\D^A$ as in Definition~\ref{defDA}.
As we estimate the max-linear coefficient matrix $A$, which encodes the paths of the DAG and not the edges, we estimate the so-called reachability DAG, where all positive $a_{ij}$ are taken as edges from $j$ to $i$.
However, as is well-known in extreme value statistics, we face certain challenges related to both the non-parametric nature of the estimators and the finite number of exceedances $k$. 
One consequence is that the estimated matrix $\wh A$ can have small positive entries $\wh a_{ij}$ even when there is no path from $j$ and $i$.

As a remedy, we mimic a `hard thresholding' procedure.
Let $\wh A=(\wh a_{ij})_{d\times d}$ be the estimated max-linear coefficient matrix.
Then we estimate the reachability version of $\D^A$ as 
\begin{align*}
	\wh\D^{A}_{\delta}=(V, \wh E^A_{\delta})
    \coloneqq\Big( V, \Big\{(j,i): \wh a_{ij}> \underset{k\in \textrm{de}(j)\cap \pa(i)}{\bigvee} \frac{\wh a_{ik}\wh a_{k j}}{\wh a_{kk}}+ {\delta} \Big\}\Big),
	\end{align*}
for $\delta\ge 0$.
For $\delta>0$, this compensates for the problems of obtaining wrong edges and also results in a sparser and better interpretable graphical structure.

We first want to get an impression on the stability of the minimum max-linear DAGs $\wh \D^{ A}_{\delta}(k)$ for different numbers of exceedances $k$ and a range of $\delta$. 
To this end, we estimate matrices $\wh A$ for a range of $k$. 
In particular, for each value of $k$ we take $K_k=\{k, k+2, k+4, k+6, k+8\}$ for $k\in \{50, 60, 70, 80, 90\}$, and compute $\wh\D^{A}_{\delta}(r)$ for each $r\in K_k$ and $\delta\in\{0,0.025,0.0.05,0.1\}$. 
Figure~\ref{fin_network} in Appendix~\ref{A:dags} depicts the estimated DAGs for different numbers of exceedances $k$ in its rows and for different $\delta$ in its columns.

To compare between these DAGs we use as metric 
the normalised structural Hamming distance nSHD between two graphs $G_1$ and $G_2$, which is a standard performance measure applied in causal inference;
see e.g. \citet[eq.~(1)]{TBK}. 
We recall its definition for directed graphs: the structural Hamming distance SHD($G_1, G_2$) is the minimum number of edge additions, deletions and reversals to obtain $G_1$ from $G_2$; let $E(G_1)$ and $E(G_2)$ denote the set of edges in $G_1$ and $G_2$, respectively.
Then
\begin{align*}
    \text{nSHD}(G_1, G_2)=\frac{\text {SHD}(G_1, G_2)}{|E(G_1)|+|E(G_2)|}.
\end{align*}
This distance is applied to every two of the five DAGs $\wh \D^{A}_{\delta}(r)$ where $r\in K_k$ for fixed $k$ and fixed $\delta$.
We then select a so-called graph centroid (analogously to Definition~1 of \cite{TBK} for spanning trees), which is the DAG composed of nodes closest to the others with respect to nSHD; that is, 
\begin{align}\label{def:Dest}
\wh \D^A_{\delta}(K_k)=\underset{\wh \D^A_{\delta}(r): r\in K_k}{\text{arg\,min}} \sum_{r_j\in K_k\setminus\{r\}}\text{nSHD}(\wh \D^A_{\delta}(r), \wh\D^A_{\delta}(r_j)).
\end{align}
Similar to \citet{KK}, we start from a number $k$ of exceedances, take $k\in \{50, 60, 70, 80, 90\}$, and estimate the matrix $A$ for each number of exceedances $r\in K_k=\{k, k+2, k+4, k+6, k+8\}$. 
The four plots in Figure~\ref{stability_k} correspond to four different $\delta\in\{0,0.025,0.0.05,0.1\}$ and give for each $r\in K_k$ the value of the sum in \eqref{def:Dest}.

\begin{figure}[t]
    \centering
    \includegraphics[width=16cm, height=4cm]{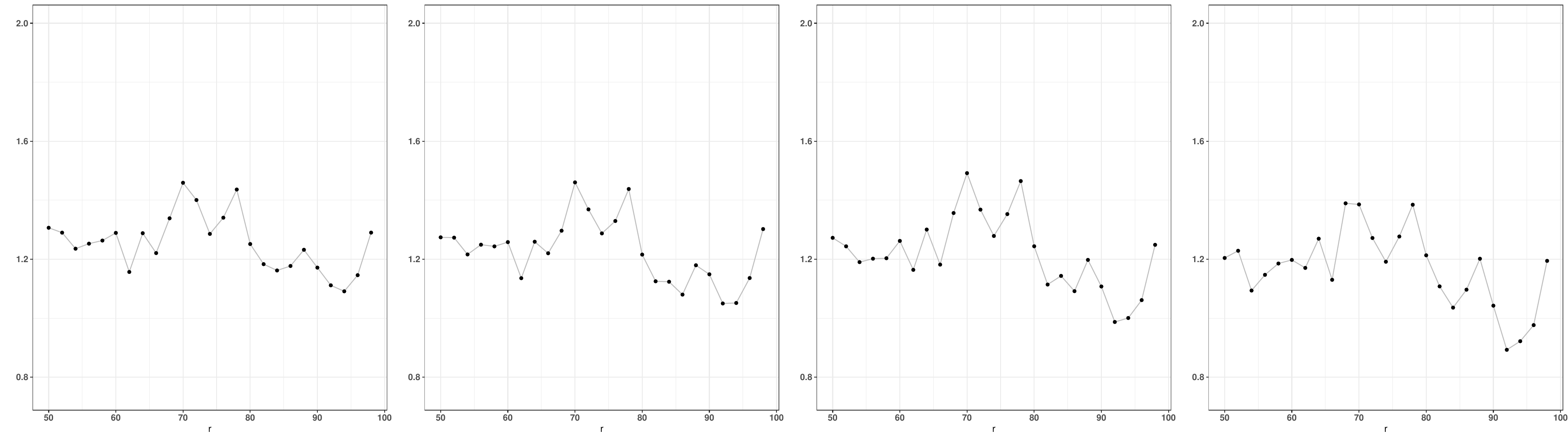}
    \caption{The four figures correspond to $\delta=0, 0.025, 0.05, 0.1$ from left to right. Each figure presents for every $r\in \{50,52,54,\dots, 98\}$ the value of sum in \eqref{def:Dest}. The minimizer in \eqref{def:Dest} belongs for all $\delta$ to $K_{90}$. 
     }
    \label{stability_k}
\end{figure}

The overall minimum is taken for $\delta=0.1$ and $r=92$. 
This choice of $\delta$ gives a sparser DAG compared to smaller values of $\delta$, and thus may facilitate interpretation.

In addition to the nSHD distances, we also consider the stability score between different DAGs as defined in \citet[Lemma~1]{TBK}, which, for fixed $\delta$, is given by: 
\begin{align}\label{eq:stable}
s_{ij}(r)=\#\{\wh \D^{A}_{\delta}(r): j\to i{\,\,\text{is present in}\,\,}\wh \D^{A}_{\delta}(r)\}, \quad r\in K_k.
\end{align}
The number of edges estimated for the DAGs based on the five different exceedances within $K_k$ and fixed $\delta$ can vary substantially between 0 and 5. 

We present the estimated resulting DAGs with some interpretations in the next subsection and also in Appendix~\ref{A:dags}.

\subsection{Results}\label{s72}

\begin{figure}[t]
    \centering
     \hspace*{-1cm}\includegraphics[width=16cm]{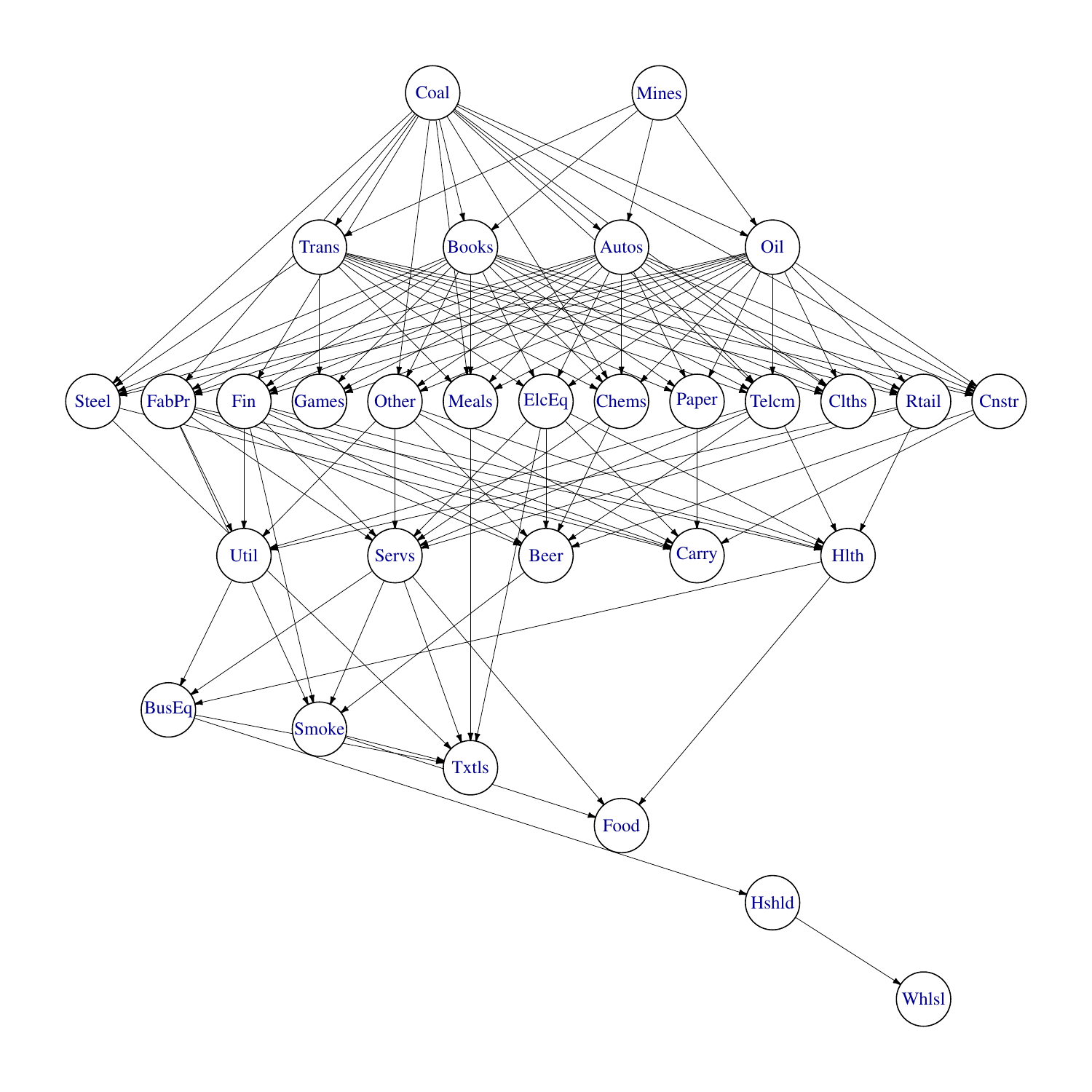}
    \vspace*{-0.5cm}
    \caption{Estimated DAG $\wh \D^{A}_{\delta}(K_{90})$, which minimizes \eqref{def:Dest} for $\delta=0.1$ and $k=92$.\\}
    \label{fin_network2}
\end{figure}

Figure~\ref{fin_network2} shows the estimated DAG $\wh \D^{A}_{\delta}(K_{90})$ for $\delta=0.1$ as defined in \eqref{def:Dest}.
It respects the estimated order given above by its estimation procedure.
Recall that each edge represents a max-weighted path; hence the DAG depicts the propagation of extreme risk through the economic sectors.

We recall that the data originate from 01.06.1989 to 15.06.1998, and provide an interpretation of the estimated extremal causal dependencies during this period.


We identify two source nodes, Coal and Mines.
During that time, Coal has been a major contributor to the production of US electricity, accounting for 48--53\% of electricity from 1990 until the 2000s and, therefore, played a major role for the economy. Mines abbreviates Precious Metals, Non-Metallic \& Industrial Metal Mining and includes besides gold also battery metals such as lithium, nickel, and cobalt with extreme price changes resulting from shortages that hit economic sectors badly. 

The first generation contains 
the Oil, Automotive and Transportation industries. 

The Oil industry (combining Petroleum and Natural Gas) has been one of the main drivers behind US industrial development, affecting not only the production industry but also the service sector, as often reflected by price fluctuations in Oil related stocks.
For instance, the Iran-Iraq war and the invasion of Iraq into Kuwait affected not only the Oil price but the entire US economy. 

As another one of the heavyweights of the economy, the Automotive sector supported millions of jobs and included the so-called Big Three carmakers (General Motors (GM), Ford, and Chrysler), which were the largest auto manufacturers in the world at the time. The first half of the 1990s was characterised by stagnating sales among these automakers due to European and Japanese competition.

The Transportation industry plays a fundamental role in the economy, enabling trade at both national and international levels. As an important economic indicator, it affects several important sectors, including Steel, Fabricated Products and Chemicals.


\begin{figure}[t]
    \centering\vspace*{-.5cm}
    \hspace*{-2cm}
    \includegraphics[width=20cm, height=15cm]{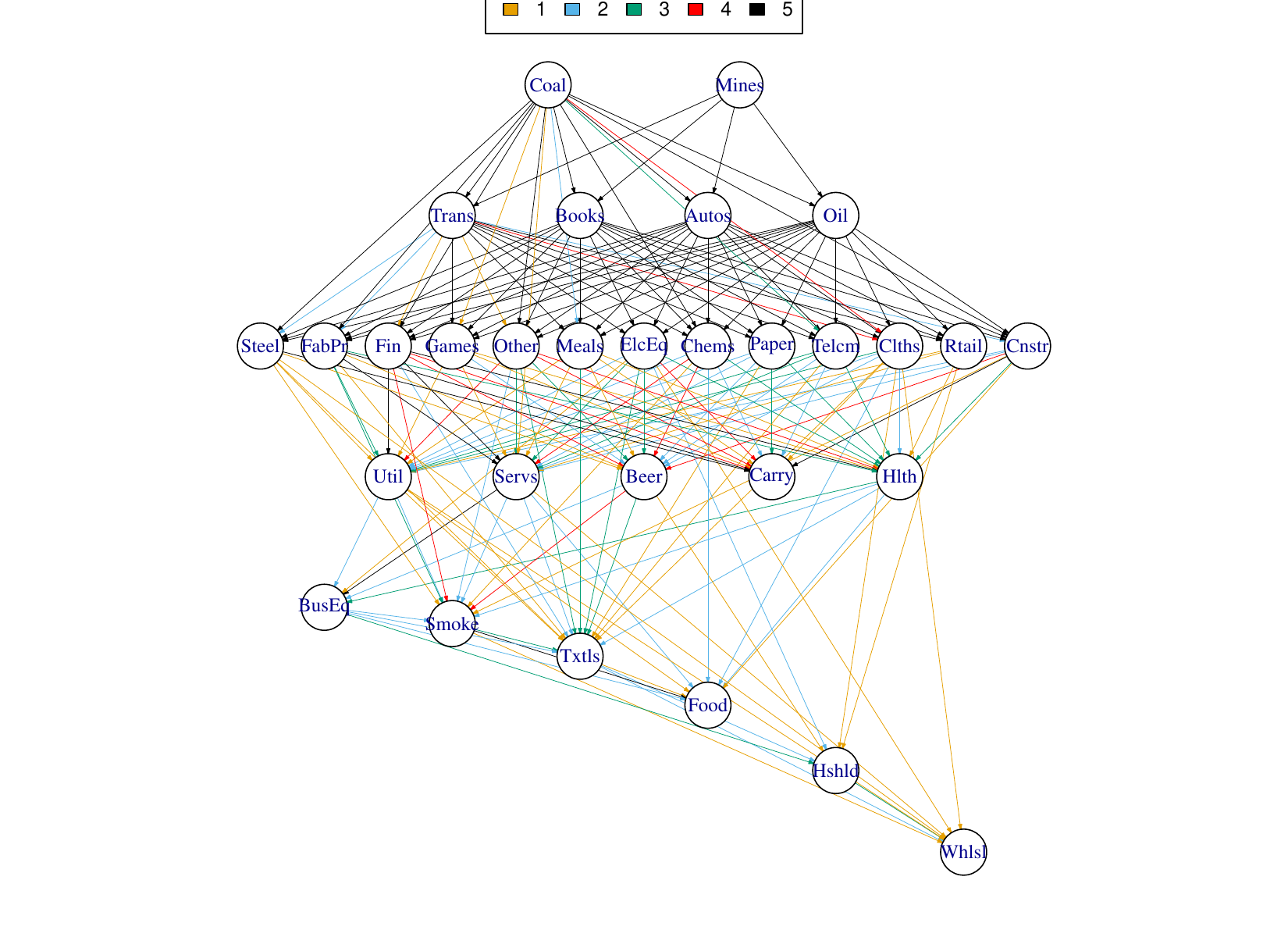}
    \vspace*{-1cm}
    \caption{Five estimated DAGs $\wh \D^{A}_{\delta}(K_{90})$ for $\delta=0.1$, with directed edges that appear at least once, when estimates are based on the five different numbers of exceedances in $K_{90}$. Edges are coloured based on their counts; i.e., how often they appear in the five estimated DAGs. The five different DAGs are depicted in Figure~\ref{fin_network4}. }
    \label{fin_network3}
\end{figure}


Among the second generation, we find many important sectors: Steel Works, Fabricated Products, Finance, Chemicals, Retail and Construction among others.
Shocks from these industries can have spillover effects on  Utilities, Services and Carry (transportation equipment) and their descendants.

The coloured edges in Figure~\ref{fin_network3} show that the estimated DAGs can differ substantially for different numbers of exceedances. While the estimated DAG in Figure \ref{fin_network2} is connected, a different number of exceedances can result in isolated nodes; i.e., nodes which are not connected to the network, or in estimated DAGs which have two or more different components, as shown in Figure~\ref{fin_network}.

The estimated DAG in Figure~\ref{fin_network2} is selected among five DAGs estimated for $\delta=0.1$ and different numbers of exceedances $r\in K_{90}$. 
Although these numbers are fairly close, the DAGs $\wh \D^{A}_{\delta}(r)$ for $r\in K_{90}$ can differ and their comparison provides a measure for the stability of the estimation. 
We thus depict the stability score \eqref{eq:stable} in Figure~\ref{fin_network3} by counting the number of edges estimated for the five DAGs.
The edges are coloured based on their counts. 
Remarkably, most of the edges out of Coal, Mines, Oil, Autos, Books and Transportation appear in all five estimated DAGs.
Coal and Transportation also have edges which appear only in some of the estimated DAGs; for instance, Coal has a directed edge to Services in one of the estimated DAGs and Transportation has an edge to Construction for only two of the five estimates.

Estimating a high-dimensional model is always difficult, and even more so based on extreme data only.
Already the estimation of the regular variation index $\alpha$ is not always so revealing; see the Hill Horror Plot in \citet[Figure~4.2]{ResnickHeavy}. 
In this paper we estimate causality for risk propagation, which is a non-trivial task. 
The estimated DAG of Figure \ref{fin_network2} seems a convincing first step towards solving this task.

\section{Conclusion}
This paper reviews current methods on modelling and estimating cause and effect of recursive max-linear models on DAGs. 
In a regular variation framework, we employ max-projections and their scalings to consistently estimate a causal order and 
the max-linear coefficient matrix, which captures the risk-relevant paths in a DAG.
We show these methods at work for a financial dataset of 30 industry portfolios and estimate a minimum max-linear DAG of extreme risks based on a novel hard-thresholding procedure and a new procedure to select the number of exceedances and the threshold by the normalized structural Hamming distance.
Finally, we investigate the stability of the estimated DAGs for different numbers of exceedances by a stability score.


\vspace{2cm}

\makebackmatter                 

\newpage

\appendix      

\section{Proof of Theorem~\ref{p2sourcenodes}}\label{A:proof}

(a) Assume that $j$ is a source node.
We use equation \eqref{eq:iaj} and obtain
\begin{align*}
\sigma^2_{M_{i,aj}}-\sigma_{M_{ij}}^2 
&= (a^2-1) \sigma_{j}^2  = a^2-1, 
\end{align*}
since $a_{jj}=\sigma^2_j=1$ and $a_{j\ell}=0$ for all $\ell\neq j$ as $j$ is a source node.

For the reverse, assume that $j$ is not a source node such that $\an(j)\neq\emptyset$.
We estimate \eqref{eq:iaj} as follows:
\begin{align*}
\sigma^2_{M_{i,aj}}-\sigma_{M_{ij}}^2 
&\le  (a^2-1) a_{jj}^2+ (a^2-1) \sum_{\ell\in\an(j)} a_{j\ell}^2 \\ 
& = (a^2-1)  \sum_{\ell\in V} a_{j\ell}^2 = a^2-1
\end{align*}
by Lemma~\ref{p2ineq}(i).

If $i\in\an(j)$, then 
\begin{align*}
\sigma^2_{M_{i,aj}}-\sigma_{M_{ij}}^2 
& < (a^2-1) (a_{jj}^2+ a_{ji}^2
) + \sum_{\ell\in\an(j)\setminus\{i\}} \big((a_{i\ell}^2\vee a^2 a_{j\ell}^2) - (a_{i\ell}^2\vee  a_{j\ell}^2)\big) \\
& \le (a^2-1)  \sum_{\ell\in\An(j) } a_{j\ell}^2 \le a^2-1
\end{align*}
again by Lemma~\ref{p2ineq}(i).

(b) For the right-hand side of \eqref{p2critpair} we use the relevant part of \eqref{eq:b} to obtain
\begin{align*}
\sigma_{M_{i,j, O}}^2 
 &= \sum_{\ell\in O\cup\{j\}} a^2_{\ell\ell} + \sum_{\ell\in  
O^c\cap(\an(j)\cup\An(i))\setminus\{j\}} a^2_{i\ell}\vee a^2_{j\ell}
\end{align*}
and, simply setting $a^2_{i\ell}=0$ for all $\ell\in V$ we find
\begin{align}\label{eq:MjO}
\sigma_{M_{j, O}}^2 = \sum_{\ell\in O\cup\{j\}} a^2_{\ell\ell} + \sum_{\ell\in  O^c\cap\an(j)}  a^2_{j\ell}.
\end{align}

Assume that $\an(j)\cap O^c=\emptyset$.
We use \eqref{eq:b} for $I=O$ 
\begin{align*}
\sigma^2_{M_{i,aj, aO}} - \sigma^2_{M_{i,j, O}} 
& =  (a^2-1)\sum_{\ell\in  O\cup\{j\}}  a^2_{\ell\ell} 
+  \sum_{\ell\in  O^c\cap\an(j)} \big(a^2_{i\ell}\vee a^2 a^2_{j\ell}-
a^2_{i\ell}\vee a^2_{j\ell}\big), 
\end{align*}
which reduces to the first sum, since $ a_{j\ell}=0$ for all $\ell\in  O^c\setminus\{j\}$
by assumption.
This also implies that $\sigma_{M_{j, O}}^2 = \sum_{\ell\in O\cup\{j\}} a^2_{\ell\ell}$, which gives \eqref{p2critpair}.

For the reverse assume that $j\in O^c$ has an ancestor in $ O^c$ and $i\notin\an(j)$. 
We estimate \eqref{eq:b} as follows
\begin{align*}
\sigma^2_{M_{i,aj, aO}} - \sigma^2_{M_{i,j, O}} 
&\le (a^2-1)\sum_{\ell\in  O\cup\{j\}}  a^2_{\ell\ell} 
+ (a^2-1) \sum_{\ell\in  O^c\cap\an(j)}   a^2_{j\ell} = (a^2-1)\sigma_{M_{j, O}}^2,
\end{align*}

If $i\in  O^c\cap\an(j)$, then \eqref{eq:b} gives
\begin{align*}
&\sigma^2_{M_{i,aj, aO}} - \sigma^2_{M_{i,j, O}} \\
&= (a^2-1)\sum_{\ell\in  O\cup\{j\}}  a^2_{\ell\ell} + 
 \big(a^2_ {ii}\vee a^2 a^2_{ji} - a^2_ {ii}\vee  a^2_{ji}\big) +
\sum_{\ell\in (O^c\cap\an(j))\setminus\{i\}} \big(a^2_ {i\ell}\vee a^2 a^2_{j\ell} - a^2_ {i\ell}\vee  a^2_{j\ell}\big)\\
&< (a^2-1)\sum_{\ell\in  O\cup\{j\}}  a^2_{\ell\ell} + 
(a^2-1)a^2_{ji}+ 
\sum_{\ell\in (O^c\cap\an(j))\setminus\{i\}} \big( a^2 a^2_{j\ell} -   a^2_{j\ell}\big)\\
&= (a^2-1)\sum_{\ell\in  O\cup\{j\}}  a^2_{\ell\ell} 
+ (a^2-1) \sum_{\ell\in  O^c\cap\an(j)}   a^2_{j\ell} = (a^2-1)\sigma_{M_{j, O}}^2.
\end{align*}

Assume that $j_1,j_2\in O^c$ satisfy \eqref{p2critpair}. 
We proceed via contradiction, and assume that $j_1\in\an(j_2)$. 
It then follows from the sentence after \eqref{p2critpair} in (b) that $\sigma^2_{M_{j_1,a j_2,a O}}-\sigma_{M_{j_1,j_2, O}}^2 < (a^2-1)\sigma_{M_{j_2, O}}^2$. 
However, this contradicts the equality in~\eqref{p2critpair}, implying that we cannot have $j_1\in\an(j_2)$. 
Exchanging the roles of $j_1$ and $j_2$ shows that $j_2\notin\an(j_1)$.
\halmos


\section{Portfolio}\label{A:portfolio}

The following table provides some explanations for the abbreviations of the 30 industry portfolios of daily averaged returns; details are given in a file stored together with the data at \url{
https://mba.tuck.dartmouth.edu/pages/faculty/ken.french/data_library.html}\\

\begin{center}
\begin{tabular}{ r| l l }
1 & Food: & Food products \\
2 & Beer: & Beer \& Liquor\\
3 & Smoke: & Tobacco Products \\
4 & Games: & Recreation\\ 
5 & Books: & Printing \& Publishing\\ 
6 & Hshld: & Consumer Goods\\ 
7 & Clths: & Apparel\\ 
8 & Hlth: & Healthcare, Medical Equipment, Pharmaceutical Products\\ 
9 & Chems: & Chemicals\\ 
10 & Txtls: & Textiles\\
11 & Cnstr: & Construction \& Construction Materials\\ 
12 & Steel: & Steel Works etc.\\ 
13 & FabPr: & Fabricated Products and Machinery\\ 
14 & ElcEq: & Electrical Equipment \\
15 & Autos: & Automobiles \& Trucks\\
16 & Carry: & Aircrafts, Ships \& Railroad Equipment\\ 
17 & Mines:  & Precious Metals, Non-Metallic \& Industrial Metal Mining\\
18 & Coal: & Coal\\
19 & Oil:   & Petroleum and Natural Gas\\
20 & Util:   & Utilities \\
21 & Telcm:  & Communication \\
22 & Servs:  & Personal and Business Services\\
23 & BusEq:  & Business Equipment\\
 24 & Paper:  & Business Supplies and Shipping Containers\\
25 & Trans:  & Transportation\\
26 & Whlsl:  & Wholesale\\
27 & Rtail:  & Retail\\
28 & Meals:  & Restaraunts, Hotels \& Motels\\
29 & Fin:    & Banking, Insurance, Real Estate, Trading\\
30 & Other:  & Everything Else\\
\end{tabular}
\end{center}

\vspace*{1cm}

\section{Centroid DAGs for different $K_k$ and different $\delta$}\label{A:dags}

Figure~\ref{fin_network} presents the centroid DAGs for different $K_k$ (rows) and different $\delta$ (columns). We find that increasing $\delta$ makes the estimated DAG more sparse,
whereas increasing the number of exceedances also tends to increase the number of edges, but not monotonously.


\begin{figure}
    \centering\vspace*{-.5cm}
   \includegraphics[width=16cm, height=20cm]{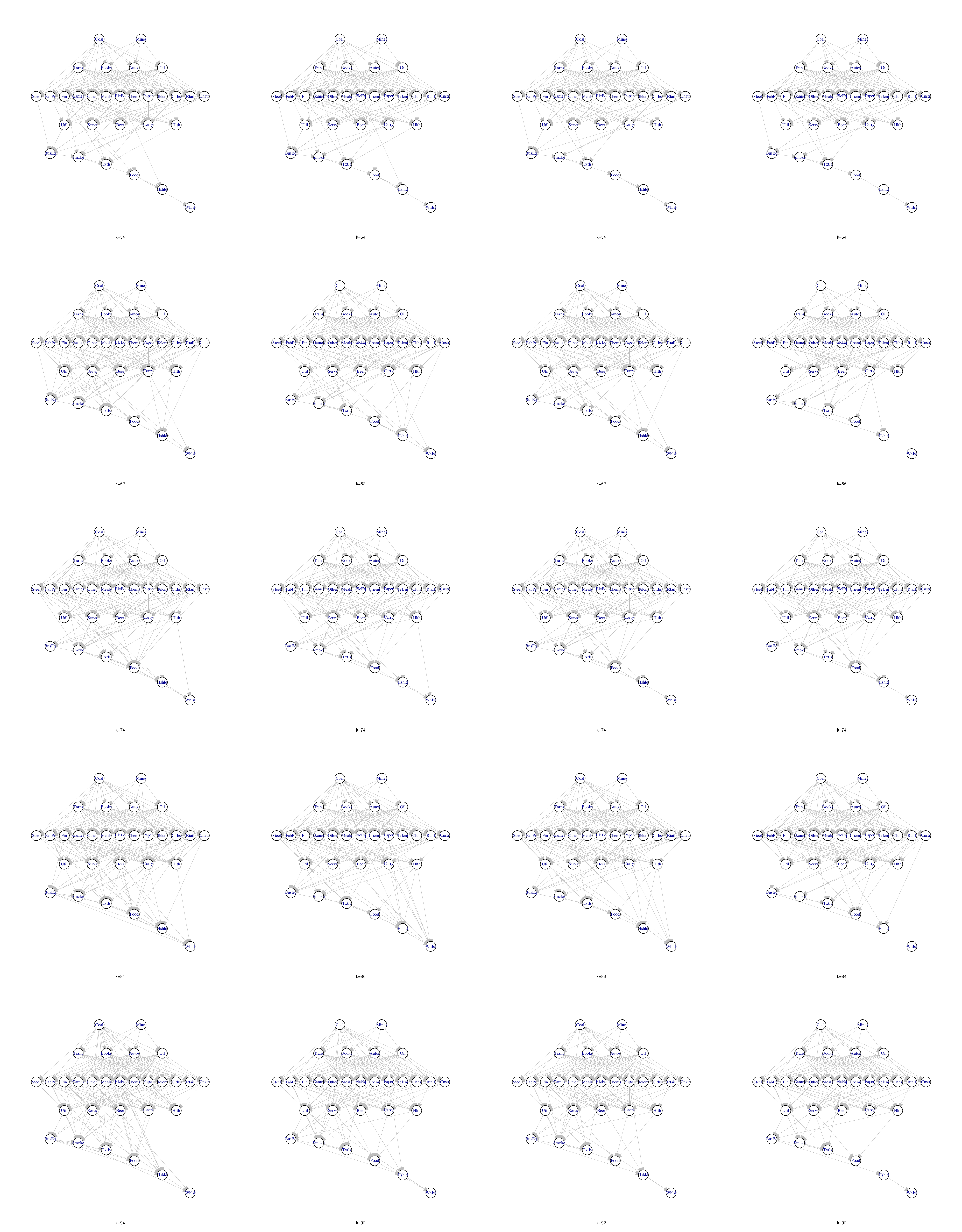}
    \caption{Centroid DAGs $\wh\D^{A}_{\delta}(K_k)$. Left to right: each column corresponds to $\delta\in \{0, 0.025, 0.05, 0.1\}.$ Top to bottom: each row contains the centroid DAGs for $k\in\{50, 60, 70, 80, 90\}$. }
    \label{fin_network}
\end{figure}

\newpage

\section{Estimated DAGs with different edge counts}\label{A:counts}

Figure~\ref{fin_network4} shows five estimated DAGs with number of edges corresponding to $s_{ij}=1,2,3,4,5$ (left to right) coloured as in Figure~\ref{fin_network3}.

\begin{figure}[H]
    \centering
    \includegraphics[width=16cm, height=5.75cm]{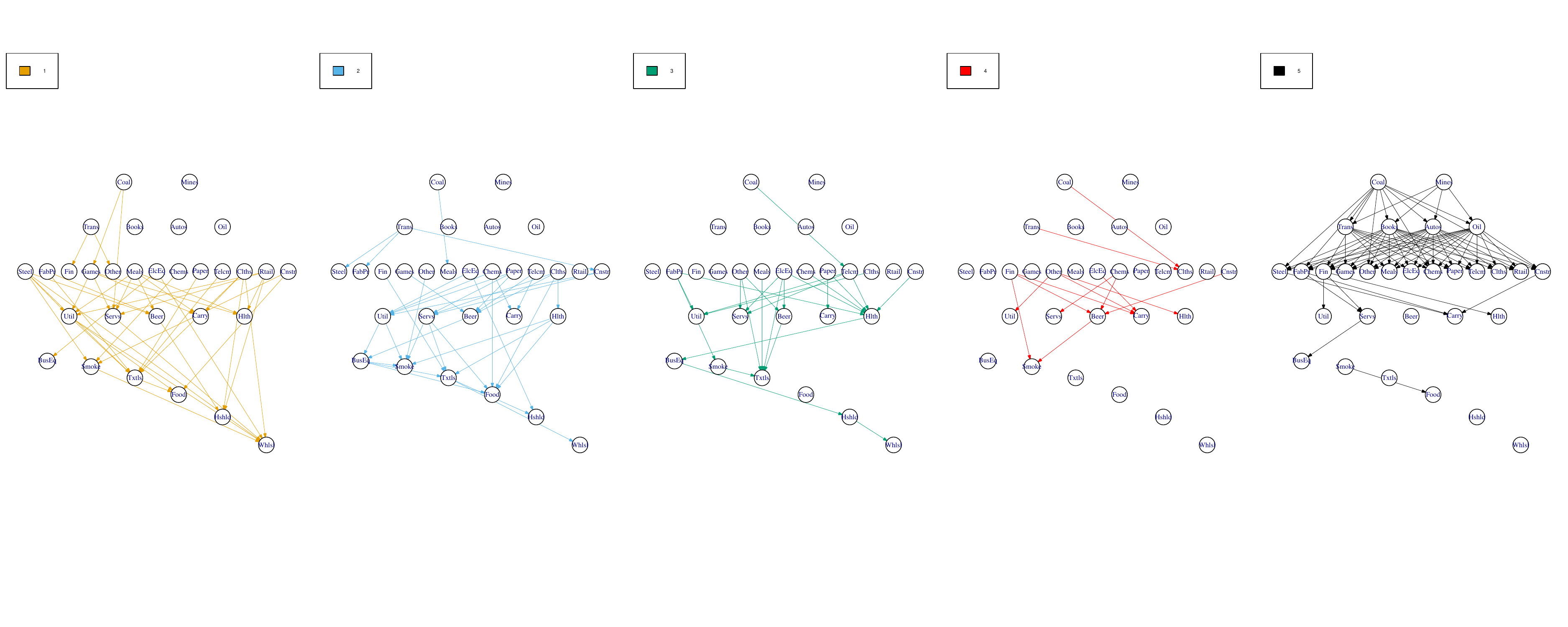}
    \caption{Estimated DAGs $\wh \D^{A}_{\delta}(K_{90})$ for $\delta=0.1$ which contain edges that appear at least once for $r\in K_{90}$ exceedances. Edges are coloured based on their counts with colours as in Figure~\ref{fin_network3}.}
    \label{fin_network4}
\end{figure}

\bibliography{Graphs}


\end{document}